\def\BibTeX{{\rm B\kern-.05em{\sc i\kern-.025em b}\kern-.08em
		T\kern-.1667em\lower.7ex\hbox{E}\kern-.125emX}}
\newtheorem{theorem}{\bf Theorem}
\newtheorem{lemma}[theorem]{\bf Lemma}
\newtheorem{corollary}[theorem]{\bf Corollary}
\newtheorem{assumption}{Assumption}
\newtheorem{remark}{Remark}
\newtheorem{definition}{Definition}
\def\QED{~\rule[-1pt]{5pt}{5pt}\par\medskip}
\renewenvironment{proof}{{\bf Proof: \ }}{ \hfill \QED}
\renewcommand{\Re}{\mathbb{R}}
\newcommand{\sumt}{\sum_{t=1}^{T}}
\def\endthebibliography{%
	\def\@noitemerr{\@latex@warning{Empty `thebibliography' environment}}%
	\endlist
}
\begin{document}
	
	\title{Interference Constrained  Beam Alignment for Time-Varying Channels via Kernelized Bandits}

	\author{
		\IEEEauthorblockN{Yuntian Deng}
		\IEEEauthorblockA{Department of ECE \\
			The Ohio State University\\
			Columbus, OH 43210 \\
			deng.556@osu.edu}
		\and
		\IEEEauthorblockN{Xingyu Zhou}
		\IEEEauthorblockA{Department of ECE  \\
			Wayne State University\\
			Detroit, MI 48202\\
			xingyu.zhou@wayne.edu}
		\and
		\IEEEauthorblockN{Arnob Ghosh, Abhishek Gupta}
		\IEEEauthorblockA{Department of ECE\\
			The Ohio State University\\
			Columbus, OH 43210 \\
			\{ghosh.244, gupta.706\}@osu.edu}
		\and
		\IEEEauthorblockN{Ness B. Shroff}
		\IEEEauthorblockA{Department of ECE \& CSE \\
			The Ohio State University\\
			Columbus, OH 43210 \\
			shroff.11@osu.edu}
	}
	
	\maketitle
	
	\begin{abstract}
		To fully utilize the abundant spectrum resources in millimeter wave (mmWave), Beam Alignment (BA) is necessary for large antenna arrays to achieve large array gains. In practical dynamic wireless environments, channel modeling is challenging due to time-varying and multipath effects. In this paper, we formulate the beam alignment problem as a non-stationary online learning problem with the objective to maximize the  received signal strength under interference constraint. In particular,  we employ the non-stationary kernelized bandit to leverage the correlation among beams and model the complex beamforming and multipath channel functions. Furthermore, to mitigate interference to other user equipment, we leverage the primal-dual method to design a constrained UCB-type kernelized bandit algorithm. Our theoretical analysis indicates that the proposed algorithm can adaptively adjust the beam in time-varying environments, such that both the cumulative regret of the received signal and constraint violations have sublinear bounds with respect to time. This result is of independent interest for applications such as adaptive pricing and news ranking. In addition, the algorithm assumes the channel is a black-box function and does not require any prior knowledge for dynamic channel modeling, and thus is applicable in a variety of scenarios. We further show that if the information about the channel variation is known, the algorithm will have better theoretical guarantees and performance. Finally, we conduct simulations to highlight the effectiveness of the proposed algorithm.
	\end{abstract}
	
	\begin{IEEEkeywords}
		mmWave beam alignment, Gaussian Process bandit, non-stationary bandit
	\end{IEEEkeywords}
	
	\section{Introduction}
	In the era of big data, the demand for high-speed communication is significantly increasing. Various data-hungry applications, such as virtual and augmented reality, high-resolution mobile video streaming, and delay-sensitive online gaming, come with significant traffic demands and incentivize existing cellular network to seek larger bandwidths by communicating at higher frequencies. Millimeter wave (mmWave) band, spanning from 30 to 300 GHz, is a promising technology to support such high demand. 
	
	There are three fundamental challenges which can hinder fully utilizing of the mmWave system: \textbf{(I)} \textbf{Sample complexity.} The propagation loss is severe due to high atmospheric attenuation, which increases at higher frequencies. To overcome this, beamforming with a large antenna array is leveraged to form a directional beam and reduce the channel loss, which could provide a higher throughput for a certain direction. Beam alignment between transmitter and receiver is therefore needed before data transmission  \cite{hashemi2018efficient}. Usually,  time-consuming exhaustive searches are performed and many samples are required to find the optimal beam direction.  \textbf{(II)} \textbf{Time-varying complicated function.} Signals reach the receiver via multiple paths, including one line-of-sight (LOS) path and several non-line-of-sight (NLOS) paths. To this end, the received signal strength is a non-linear non-convex function with respect to beamforming vectors, and the Channel State Information (CSI) for multi paths is difficult to obtain and estimate. In practical environments and mobile scenarios, the mmWave channel becomes time-varying and estimating the channel gain  becomes more challenging,  as existing paths may quickly disappear and new paths may appear \cite{zhang2020beam}. \textbf{(III)} \textbf{Interference Constraints.} Although the beam is directional, in heterogeneous environments with multiple transmitters, users in neighboring cells still have high interference if they use the same frequency \cite{bayraktar2021efficient}, particularly users who are located close to the cell's edge \cite{wang2019interference}. Therefore, some beam candidates cannot be used for data transmission. In dynamic environments, it is more difficult to check their eligibility with time-varying channels.

	
	In the presence of these challenges, we pose the following question: \textit{In a  non-stationary multi-path environment,   can we learn to find the optimal interference-constrained beam alignment configuration in a sample-efficient manner? }
	
	We cast this problem as a sequential decision-making problem, where the objective is to maximize the  received signal strength,  under a time-varying constraint on the interference at other users' terminals. We use the kernelized multi-armed bandits (i.e. Gaussian Process bandit) \cite{chowdhury2017kernelized} to model this problem, where each sample point represents a beamforming vector on an array of fixed-phase antennas. As different beamforming vector leads to various beam directions and interference levels, the received signal and the interference  change accordingly. We aim at designing a sequential query point selection (i.e., beamforming vector selection) strategy that maximizes the reward (i.e., received signal strength) in a time-varying channel with a small violation in the soft interference constraint.  
	
	Our constrained kernelized bandit model addresses the above three challenges in a unified way. Since the reward (received signal) function is continuous, it can be uniformly approximated by a function in the reproducing kernel Hilbert space (RKHS) \cite{micchelli2006universal} (under a proper choice of the kernel). To utilize this representation power of RKHS, we further assume that the reward function is within a RKHS induced by a kernel. With this assumption, the complicated multi-path channel in challenge \textbf{(II)} could be represented by one function within the RKHS. Such  kernel-based modeling is also used in WLAN \cite{kushki2007kernel} and wireless sensor networks \cite{honeine2009functional}. As the channel is time-varying, the reward function varies from time to time within the RKHS. Further, since similar beamforming vectors lead to similar beam and received signal strength, this assumption automatically utilizes the beam correlation information \cite{wu2019fast} as RKHS is kernel-based and correlation-embedded. Therefore, we significantly reduce the search space, compared to the exhaustive search in challenge \textbf{(I)}. 
	Moreover, we model the interference-based beam eligibility in challenge  \textbf{(III)} as a constraint function. When the received signal at each non-target user is less than a threshold, the constraint is satisfied and the beam is eligible. As the channel is time-varying, this constraint also changes with time. Finally, we develop the constrained kernelized bandit based on the primal-dual method and Gaussian Process agnostic setting. The proposed algorithm periodically restarts to overcome the non-stationarity in the environment. We further provide theoretical bounds on the performance of reward and constraint violations for any kernel.
	
	
	In summary, our major contributions are as follows:
	
	\textbf{1) Algorithm Design:} Using the online learning framework, we propose a novel constrained kernelized bandit algorithm with restart for beam alignment in time-varying environments. In particular, we first leverage the inherent correlation and representation power of kernelized bandits, where the reward function supports general multi-path scenarios and various channel models. Second, we model the interference as a constrained bandit via the primal-dual method. Third, we propose a restart schedule to handle the time-varying environments. 
	
	\textbf{(2) Theoretical Results:} We provide theoretical guarantees on the cumulative regret and constraint violations, where the regret is the loss compared with the optimal algorithm. In particular, our proofs overcome technical challenges posed by time-varying environments. We further provide sublinear bounds for a general non-stationary constrained kernelized bandit framework, which is a missing part of the state-of-art kernelized bandit. Additionally, the proposed algorithm is performed without prior knowledge about total channel variation. If this knowledge is provided, we show that the algorithm can further improve performance and both regret and constraint violations bounds become tighter.

	\section{Related Work}
	\subsection{Beam Alignment}
	To reduce the beamforming overhead, there is a large amount of work on digital and analog beamforming \cite{roh2014millimeter} and channel selection \cite{combes2014dynamic}. The authors in \cite{hashemi2018efficient}  exploit the inherent correlation in beamforming and leverage this contextual information to reduce the search space for beam alignment. However, its assumption of unimodal function with respect to beam directions only holds in single path situations and fails in multi-path scenarios.  Wu et. al. \cite{wu2019fast} utilize the prior knowledge of the channel fluctuation to accommodate reward uncertainty. Specifically, they assume that the variance of the channel fluctuation is known and as a result,  less exploration is needed and the beam alignment is accelerated. The authors in \cite{zhang2020beam} leverage stationary stochastic bandits to sense the change of the environment in dynamic beamforming. In their model, actions (arms) are designed based on the difference (offset) of the indices of the optimal beams in two adjacent time slots, which measures the rate of change of the environments. They further assume that the reward of each arm is from a fixed distribution depending on the arm but not on the time slot (therefore, there exists a best arm that captures the best rate of changes), which limits its generalization to model dynamic environments. In \cite{gupta2020beam}, non-stationary multi-armed bandits are leveraged for time-varying channels in beam alignment, where they use a sliding window-based algorithm and weight penalty-based algorithm. However, in their model, arms are independent of each other and prior knowledge of the number of breakpoints (when the reward distribution changes) is required, which means it can only handle abruptly change scenarios rather than slowly-changing scenarios and the correlation between beams is not utilized. In contrast to previous works, we focus on the interference-constrained beam selection in a general time-varying environment, where changes are time-dependent and it includes both abruptly changes and slowly changes. The proposed algorithm can be performed without any prior knowledge about changes.
	
	\subsection{Non-stationary bandit}
	Multi-armed bandit (MAB) is a general framework in sequential decision-making, where the agent needs to select a query point at each time slot to maximize the cumulative reward \cite{robbins1952some}. The unknown reward function is usually a black-box function defined over a large domain space. In real-world scenarios, the reward function is often not fixed and varies over time. For example, in stochastic linear bandit (the black-box function is linear and query points are non-orthogonal), three common techniques are leveraged to tackle  time-varying rewards:  (i) periodically restarting the learning process \cite{zhao2020simple}, (ii) regression based on the data within a sliding window \cite{cheung2019learning}, (iii) putting a time-dependent weight across all data \cite{russac2019weighted}. Recently, kernelized  bandits (i.e. Gaussian process bandits) have become popular, as they generalize traditional MAB and linear bandit, by allowing the reward function to be non-linear and non-convex \cite{chowdhury2017kernelized}. Similarly, to tackle non-stationarity, there are mainly three methods:  restart \cite{zhou2021no}, sliding window \cite{zhou2021no} and  weight penalty \cite{deng2022weighted}. This problem becomes more challenging in practical applications where there are unknown soft constraints \cite{zhou2022kernelized}. In this paper, we generalize previous non-stationary results to the case with a general unknown time-varying constraint function. We further show that the restart method works with constraint bandits and cast the interference-constrained beam alignment problem as a constrained bandit model in time-varying environments.



	\section{Problem Statement}

	We consider a millimeter wave (mmWave) point-to-point communication system, where the transmitter (BS) is equipped with phased array antennas and the receiver (UE) uses a single antenna. Before data transmission, beam alignment is needed to achieve the array gains and high throughput. Due to the high power consumption in digital beamforming, we focus on an analog beamforming structure that relies on a single radio-frequency (RF) chain and one analog-to-digital converter (ADC) with less power consumption. In practical dynamic environments, the mmWave channel varies fast and the change becomes faster in mobile scenarios, where either the transmitter or the receiver is moving. 
	Let $x_t \in \mathbb C^{M}$ denote the beamforming weight vector at time $t$  with $M$ antennas. We assume that $x_t \in X$ is in the predefined codebook, where the number of beams $|X|$ is finite. By controlling $x_t$, we have various beams with different directions and widths. The transmitted pilot signal is $s$ for all time slots. 
	Since the receiver has the  omni-directional beam, the received signal at the receiver is the weighted combination of the message $s$ across all antennas at the transmitter.  For user equipment (UE) A, we have the noise-free received signal strength (RSS) as 
	\begin{align*}
		y_t = |H_t x_t s | =: f_t(x_t),
	\end{align*}
	where $H_t= \sum_{l=1}^L H_t^l$ denotes multipath time-varying channel gain for UE A with $L$ paths, which is a location-dependent unknown function of time $t$.  We assume that UE A observes a noisy signal strength with an additive Gaussian noise $n_t$, {\em without any estimates} of the channel gain $H_t$. Therefore, we have the observation at time $t$ as $r_t= f_t(x_t) +  n_t$.
	Besides UE A, we further consider a group of UEs $\mathcal J$. As UE A is the target the beam wants to focus on, UEs in $\mathcal J$ do not want to suffer from interference. We formulate the constraint of interference as follows. 
	
	For UE $j \in \mathcal J$, we have  multipath time-varying channel $H_{t,j}$. The noise-free received signal strength is $y_{t,j} = |H_{t,j} x_t s|$, which is required to be less than a threshold $\xi_j$. This constraint aims at reducing interference \cite{wang2019interference} and achieving a large SINR for all UEs in $\mathcal J$. We define the constraint for UE $j$ as $c_{t,j} =  y_{t,j}  -\xi_j \leq 0 , \forall j \in \mathcal J$. Combining $|\mathcal J|$ constraints through defining the maximum cost as a fixed function 
	\begin{align*}
		g_t(x_t) = \max_{j \in \mathcal J} \{c_{t,j} \}.
	\end{align*}
	We further assume that the observation is noisy with Gaussian noise $\epsilon_t$, then, we have the interference constraint as follows
	\begin{align} \label{equ: constraint}
		c_t =  g_t(x_t) + \epsilon_t \leq 0.
	\end{align}
	
	In summary, we want to control the beamforming vector $x_t$, to find the beam which achieves largest RSS $f_t(x_t)$ for UE A while making RSS for UEs in $\mathcal J$ less than a threshold, i.e., constraint $g_t(x_t) \leq 0$. Both $f_t$ and $g_t$ are time-varying unknown functions and the observations $r_t$ and $c_t$ are noisy such that $\mathbb{E}[r_t] = f_t(x_t)$ and $\mathbb{E}[c_t] = g_t(x_t)$. In heterogeneous environments with multiple transmitters, this setting is common as $\mathcal J$ includes UEs in the same cell in which A resides and UEs in the neighboring cells.
	
	
	With different beamforming vectors $x_t$, the transmitter can turn the beam in various directions and widths. Our goal is to find the best beam setting $x_t$ to maximize the received energy, with time-varying interference constraints and non-stationary channel conditions. 
	Let $T$ denote the time period for beam alignment before downlink data transmission. We formulate this beam alignment into an online constrained stochastic optimization problem. Here, the algorithm picks beamforming weight sequence $\{x_t \}_{t=1}^T$ to maximize the expected received signal strength $ \mathbb{E} [r_t]$ subject to a constraint on the  violation of interference constraint $  \mathbb{E} [c_t] \leq 0$.  We thus have an optimization problem with a constraint at time $t \in [1,T]$ as follows:
	\begin{align} \label{equ: objetive1}
		&\max_{ x_t } f_t(x_t)  \textit{\space s.t. \space}  g_t(x_t) \leq 0. 
	\end{align}
	At every time $t$, we want to use the past observations $\{(r_1,c_1),\ldots,(r_{t-1},c_{t-1})\}$ to estimate the channel gains for each beam and pick an appropriate beamforming vector $x_t$ that maximizes the above optimization problem. In other words, we want to determine a learning policy $\pi_t$, which takes as input  $\{(r_1,c_1),\ldots,(r_{t-1},c_{t-1})\}$ and outputs $x_t$, without knowing time-varying functions $f_t$ and $g_t$.

	
	\section{Constrained Non-stationary Gaussian Process Bandit Model}
	
	\subsection{Preliminaries}

	\textbf{Regularity Assumptions}: As similar beamforming vectors will produce similar beams,  the beamforming vector $x_t$ is highly correlated. We use the Reproducing Kernel Hilbert Space (RKHS) induced by a kernel to model this correlation.  This formulation is different from the traditional multi-armed bandit, where each beamforming vector is modeled as orthogonal arms and the correlation information between them can not be modeled.
	We assume that $f_t$ is a fixed function in a RKHS with a bounded norm.  The RKHS, denoted by $\mathbb H_k(X)$, is completely specified by its kernel function $k(\cdot, \cdot)$, with an inner product $\left< \cdot, \cdot \right>_{\mathbb H_k}$ satisfying the reproducing property: $f_t(x) = \left <f_t, k(x, \cdot) \right >_{\mathbb H_k}$ for all $f_t \in \mathbb H_k(X)$. Similar argument holds for constraint $g_t$ with kernel $\tilde k$, and $g_t \in \mathbb H_{\tilde k}(X)$.
	\begin{assumption}[Boundedness]
		We assume that $f_t$ at each time $t$ is bounded by $\|f_t\|_{\mathbb H_k} \leq B$ and $k(x,x) \leq 1$ for a fixed constant $B$. Similarly, we assume that $||g||_{\mathbb H_{\tilde k}} \leq G$ and $\tilde k(x,x) \leq 1$ with a constant $G$.
	\end{assumption}
	
	Different from the restrictive assumption on unimodal reward function in \cite{hashemi2018efficient}, our assumption supports both single-path and multi-path channel and various channel models.  This is because an arbitrary continuous function can be approximated by an element in this RKHS under the supremum norm\cite{micchelli2006universal}. Moreover, it holds for practically relevant kernels. One concrete example is Squared Exponential kernel, defined as $k_{SE}(x,x') = \text{exp}(-s^2/2l^2)$
	where scale parameter $l>0$ and $s=\|x-x'\|_2$ specifies distance between two points. In the following assumption, we assume that there exists one beam satisfies the constraint in \eqref{equ: constraint}.
	\begin{assumption}[Slater Condition]
		\label{assump: slater}
		There exists a constant $\tau >0$ such that for any $t$,  there exists $\pi^o_t$ such that $\mathbb{E}_{\pi^o_t} g_t(x) \leq - \tau$.
	\end{assumption}	
	This is a mild assumption since it only requires the existence of a policy at time $t$ such that the constraint is less than a strictly negative value. From interference  perspective, it is satisfied if one beamforming vector $x_t$ has $g_t(x_t) \leq -\tau$.

	\textbf{Time-varying Budget}: As the channel condition under mobility is time-varying, we assume that the total variation between $f_t$ and $f_{t+1}$ satisfies the following budget, which includes both abruptly-changing and slowly-changing environments. As the channel for UEs in $\mathcal J$ is also non-stationary, a similar budget holds for the constraint $g_t$ as well.

	\begin{assumption}[Varying Budget]
		We assume that the variations in  reward and constraint functions are bounded, i.e. $\sum_{t=1}^{T-1} \|f_{t+1}-f_t\|_{\mathbb H_k} \leq B_f$ and $\sum_{t=1}^{T-1} \|g_{t+1}-g_t\|_{\mathbb H_{\tilde k}} \leq B_g$.
	\end{assumption}
	
	The variation budget will show up in the upper bounds of regret and constraint violation. In the next subsection, we develop Algorithm \ref{alg}  without prior knowledge of this budget. In Corollary \ref{cor: bound} of Section V, we show that the algorithm will have a better performance guarantee if we know this budget.
	

	\textbf{Gaussian Process agnostic setting}: We recall the surrogate model in standard GP-UCB algorithm (Kernelized bandit) \cite{chowdhury2017kernelized}.  Gaussian process (GP) and Gaussian likelihood models are used to design this algorithm only. $GP(0,k(\cdot, \cdot ))$, a Gaussian process with zero mean and covariance $k$, is the prior for the reward function $f_t$. The noise $n_t$ is drawn independently from $\mathcal N(0,\lambda)$. Conditioned on the history $\mathcal H_t$, it has the posterior distribution of $f_t$ as $GP \left(\mu_t(\cdot),\sigma_t^2(\cdot) \right)$, where the posterior mean and variance are defined as follows.
	\begin{align}
		\mu_t(x)  &= k_t(x)^T (K_t + \lambda I )^{-1} r_{1:t}. \label{equ: mean}\\
		\sigma^2_t(x) &= k(x,x) -k_t(x)^T (K_t + \lambda I )^{-1} k_t(x). \label{equ: var}
	\end{align} 
	where $r_{1:t}\in \Re^t$ is the reward vector $[r_1,\ldots,r_t]^T$. For set of sampling points $A_t=\{x_1, \ldots, x_t\}$, the kernel matrix is $K_t = [k(x,x')]_{x,x' \in A_t} \in \Re^{t \times t}$ for kernel function $k$ and the vector $k_t(x)=[k(x_1,x), \ldots, k(x_t,x)]^T \in \Re^t$.

	Similarly, for constraint $g_t$, we have an associated posterior mean $\tilde \mu_t$ and posterior variance $\tilde \sigma^2_t$, where $r_{1:t}$ is replaced by $c_{1:t}$ and kernel $k$ is replaced by $\tilde k$.
	The GP prior and Gaussian likelihood are only used for algorithm design and do not affect the setting of reward function $f_t \in \mathbb H_k(X)$, constraint function $g_t \in \mathbb H_{\tilde k}(X)$, and sub-Gaussian noise $n_t$, $\epsilon_t$. We further define  the maximum information gain \cite{srinivas2009gaussian} as follows. 
	\begin{definition} For a given kernel $k(x,x')$, the maximum information gain at time $t$ is 
		$\gamma_t := \max_{A\subset X: |A|=t} I(y_A;f_A)  = \max_{A\subset X: |A|=t}  \frac{1}{2} \log \det(I_t + \lambda^{-1} K_A)$, where $K_A= [k(x,x')]_{x,x'\in A}$ and $I_t$ is $t \times t$ identity matrix. 
	\end{definition}
	In the definition, $I(y_A;f_A)$  denotes the mutual information between $f_A=[f(x)]_{x\in A}$ and $y_A = f_A + n_A$, which quantifies the reduction in uncertainty about $f$ after observing $y_A$ at points $A$.  We have $\gamma_t = O(d \log t)$ for linear kernel, and  $\gamma_t = O( (\log t)^{d+1})$ for Squared Exponential kernel \cite{srinivas2009gaussian}.

	\subsection{Algorithm}
	\begin{algorithm}[t] 
		\SetAlgoLined
		\SetKwInOut{Input}{Input}\SetKwInOut{Initialization}{Initialization}
		\Input{Kernel $k(\cdot, \cdot), \tilde k(\cdot, \cdot), \gamma_t,  \lambda,\delta$, restart interval $W$, $\rho = \frac{4B}{\tau}$, $\eta=\frac{\rho}{G \sqrt{T}}$ }, 
		\For{$t\geq 1$}{
			\If{t mod W = 1}{
				reset to the initialization state, $t_0= t$}
			Set $ \beta_{t-1} =   B+  \frac{1}{\sqrt{\lambda}}  R \sqrt{  2\log(\frac{1}{\delta}) +  2  \gamma_{t-t_0}}$,  $ \tilde \beta_{t-1} =   G+  \frac{1}{\sqrt{\lambda}}  R \sqrt{  2\log(\frac{1}{\delta}) +  2  \tilde \gamma_{t-t_0}}$ \;
			Let $ \hat f_t(x)=Proj_{[-B,B]} \mu_{t-1} (x) + \beta_{t-1} \sigma_{t-1} (x)$, and  $\hat g_t(x)= Proj_{[-G,G]} \tilde \mu_{t-1} (x) - \tilde \beta_{t-1} \tilde \sigma_{t-1} (x)$\;
			Define acquisition: $\hat z_{\phi_t}(x) =  \hat f_t(x)  - \phi_t \textit{\space}  \hat g_t(x)$\;
			Choose beamform vector $x_t = \arg \max_{x \in X} \hat z_{\phi_t}(x)$,  Observe received signal $r_t = f_t(x_t) + n_t$, observe interference constraint $c_t = g_t(x_t) + \epsilon_t$\;
			Update $\mu_t(x)$, $\sigma_t(x)$ and $\tilde \mu_t(x)$, $\tilde \sigma_t(x)$ in \eqref{equ: mean} and \eqref{equ: var} with data $(x_t, r_t, c_t)$ from $t_0$ to $t$\;
			Update dual $\phi_{t+1}=Proj_{[0,\rho]} [\phi_t  + \eta  \hat g_t(x_t) ]$.
		}
		\caption{Restart GP-UCB with Constraints} {\label{alg}}
	\end{algorithm}
	
	We devise here a learning policy $\pi_t$ that employs restarting for dealing with learning in time-varying environments and a primal-dual method for solving the constrained optimization. The restart schedule is shown in lines 1 to 3 in Algorithm \ref{alg}. After $W$ time slots, we will discard previous estimates and restart to build a new estimate. The actual value of $W$  will be discussed in the next section. 
	
	Second, we construct an optimistic estimate on received signal $f_t$ and interference constraint $g_t$ based on  Upper Confidence Bound (UCB) bandit.  Specifically, the UCB-type exploration combines mean and variance  $\mu_{t-1} (x) + \beta_{t-1} \sigma_{t-1} (x)$ for $f_t(x)$, and $ \tilde \mu_{t-1} (x) - \tilde \beta_{t-1} \tilde \sigma_{t-1} (x)$ for $g_t(x)$ towards negative constraint in line 6. These estimates will be truncated according to their bounds $B$ and $G$ respectively.

	Third, we employ the primal-dual optimization to solve the constrained optimization problem.  Let the baseline problem  at time $t$ be given by $\max_{\pi_t} \{ \mathbb{E}_{\pi_t} f_t(x):  \mathbb{E}_{\pi_t} g_t(x) \leq 0 \}$. Then the associated Lagrangian is $L(\pi_t, \phi_t) =  \mathbb{E}_{\pi_t} [f_t(x)] - \phi_t \mathbb{E}_{\pi_t} [g_t(x)] $ and the dual problem is $D(\phi_t)=  \max_{\pi_t}   L(\pi_t, \phi_t) $.  If we approximate $ \mathbb{E}_{\pi_t} [f_t(x)]$ with $\hat f_t(x)$ and $\mathbb{E}_{\pi_t} [g_t(x)] $ with $\hat g_t(x)$, then acquisition $\hat z_{\phi_t}(x) =  \hat f_t(x)  - \phi_t  \hat g_t(x)$\ is close to $ L(\pi_t, \phi_t)$. Therefore, we can get the proximate optimal solution to $D(\phi_t)$ as $x_t = \arg \max_{x \in X} \hat z_{\phi_t}(x)$, as shown in line 7 and 8 in Algorithm \ref{alg}.

	Fourth, we update the posterior mean and variance via standard Gaussian Process regression in \eqref{equ: mean} and \eqref{equ: var}, which works for both $f_t$ and $g_t$ with kernel $k$ and $\tilde k$ respectively.
	
	Finally, we design a dual variable update towards minimizing $D(\phi_t)$. As the gradient of $\hat z_{\phi_t}(x)$ (approximating $D(\phi_t)$) with respect to $\phi_t$ is $-\hat g_t(x)$, we take a projected gradient descent with step size $\eta$. The projected upper bound is selected to be $\rho \geq \frac{4B}{\tau}$, which utilizes the information that the optimal dual $\phi^*$ is within range $[0, \frac{2B}{\tau}]$. The final dual update becomes $\phi_{t+1}=Proj_{[0,\rho]} [\phi_t  + \eta  \hat g_t(x_t) ]$.

	Except for using the restart strategy \cite{zhou2021no, zhao2020simple} to leverage the time-varying environment, other forgetting methods will also work within our framework, such as the sliding window \cite{zhou2021no, cheung2019learning} and the exponentially increasing weights \cite{deng2022weighted, russac2019weighted}.

	\section{Performance Analysis}

	In this section, we assess the performance of Algorithm~\ref{alg} and provide theoretical bounds for both  signal strength and interference constraint. 
	In the online and non-stationary setting, it is difficult to directly  solve the problem defined in  \eqref{equ: objetive1}  by choosing the best policy at time $t$. This is because both the reward and constraint function are time-varying and the noisy observation is revealed only after a policy is executed. In addition, since the constraint function $g_t$ is unknown, the constraint $g_t(x_t)\leq 0$ may not be satisfied in every time $t$. Instead, we aim to satisfy the long-term constraint $\sum_{t=1}^T g_t(x_t)\leq 0$ over a given period $T$ \cite{zhou2022kernelized}. We define the  \emph{dynamic regret} $R(T)$ as the difference of total reward between our policy $\{x_t \}_{t=1}^T$ and the best policy $\{\pi_t^* \}_{t=1}^T$. We also define the \emph{constraint violation} $V(T)$  as the cumulative violation of constraint over period $T$. We will show that the proposed algorithm  achieves both small $R(T)$ and $V(T)$.
	\begin{align*}
		R(T) &=  \sum_{t=1}^T \mathbb{E}_{\pi^*_t} f_t(x) - \sum_{t=1}^T f_t(x_t). \\
		V(T) &= \left[ \sum_{t=1}^T g_t(x_t) \right]_{+}.
	\end{align*}
	where  $[\cdot]_{+}:= \max \{ \cdot, 0\}$. $\pi^*_t$ is defined as the best policy at time $t$ which maximizes $\mathbb{E}_{\pi_t} f_t(x) = \int_{x \in X} f_t(x) \pi_t(x) dx$ while  satisfying the constraint $\mathbb{E}_{\pi_t} g_t(x) = \int_{x \in X} g_t(x) \pi_t(x)  dx $ $\leq 0$. We highlight that the optimal policy $\pi^*_t$ is time-dependent and may change over time, as $f_t$ and $g_t$ change over time. 
	
	
	In the following theorem, we obtain upper bounds on the regret of the received signal strength $R(T)$ and the interference constraint violation $V(T)$, both of which are on the order of $T^{\frac{3}{4}}$. Combining the kernel of the reward function $f_t$ and the constraint function $g_t$, we further define combined maximum information gain $\hat \gamma_T = \max\{ \gamma_T, \tilde \gamma_T \} $ and combined variation budget $B_{\Delta}= \max \{B_g, B_f \}$. We first state the result without any prior knowledge of the combined variation budget $B_{\Delta}$. 
	
	\begin{theorem} \label{thm: bound}
		If the restart period $W= \hat \gamma_T^{\frac{1}{4}} T^{\frac{1}{2}}$, for any $\rho \geq \frac{4B}{\tau}$, the regret $R(T)$ and constraint violation  $V(T)$  in Algorithm~ \ref{alg} at time $T$ are bounded by:
		\begin{align*}
			R(T) &\leq  O \left(\rho G  \sqrt{T} + \rho \hat \gamma_T^{\frac{7}{8}} B_{\Delta} T^{\frac{3}{4}}\right).\\
			V(T) &\leq O \left(\left(1+\frac{1}{\rho}\right) \hat \gamma_T^{\frac{7}{8}} B_{\Delta} T^{\frac{3}{4}} + G\sqrt{T} \right).
		\end{align*}
	\end{theorem}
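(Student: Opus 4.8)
The plan is to combine three threads: a drift-aware confidence analysis for the restarted GP posterior, an optimism-plus-Lagrangian decomposition of the dynamic regret, and an online-gradient-descent (virtual-queue) analysis of the dual variable $\phi_t$. First I would fix the high-probability event on which all confidence intervals hold, union-bounding the GP-UCB confidence lemma (with the stated $\beta_{t-1},\tilde\beta_{t-1}$) over the $T/W$ restart windows. The key non-stationary step is to account for the fact that within a window started at $t_0$ the regression implicitly treats the data as coming from a single function. Writing each in-window observation as $r_s = f_t(x_s) + (f_s(x_s)-f_t(x_s)) + n_s$ and bounding the middle term via the reproducing property together with $k(x,x)\le 1$, so that $|f_s(x_s)-f_t(x_s)| \le \|f_s-f_t\|_{\mathbb H_k}\le\sum_{u=s}^{t-1}\|f_{u+1}-f_u\|_{\mathbb H_k}$, I would show that $\hat f_t$ remains an upper estimate of $f_t$ and $\hat g_t$ a lower estimate of $g_t$, each valid up to an additive drift bias $d_t^f,d_t^g$ controlled by the within-window variation, and symmetrically that $\hat f_t(x)\le f_t(x)+2\beta_{t-1}\sigma_{t-1}(x)+d_t^f$ and $\hat g_t(x)\ge g_t(x)-2\tilde\beta_{t-1}\tilde\sigma_{t-1}(x)-d_t^g$.

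Next I would decompose the regret using optimism and the Lagrangian structure. Since $x_t$ maximizes $\hat z_{\phi_t}$ and the comparator $\pi_t^*$ is feasible, i.e. $\mathbb{E}_{\pi^*_t} g_t(x)\le 0$, while $\phi_t\ge 0$, the optimistic/pessimistic estimates give $\mathbb{E}_{\pi^*_t} f_t(x)\le \hat f_t(x_t) - \phi_t\,\hat g_t(x_t) + (\text{drift})$. Substituting the upper estimate of $\hat f_t(x_t)$ yields the per-step bound $\mathbb{E}_{\pi^*_t} f_t(x) - f_t(x_t)\le 2\beta_{t-1}\sigma_{t-1}(x_t) - \phi_t\,\hat g_t(x_t) + (\text{drift})$, and summing over $t$ splits $R(T)$ into an information-gain sum $\sum 2\beta_{t-1}\sigma_{t-1}(x_t)$, a dual term $-\sum\phi_t\,\hat g_t(x_t)$, and a cumulative drift term. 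Per window, Cauchy--Schwarz with the standard variance sum $\sum\sigma_{t-1}^2(x_t)=O(\gamma_W)$ gives $\sum\beta_{t-1}\sigma_{t-1}(x_t)=O(\gamma_W\sqrt W)$, hence $O(\gamma_W T/\sqrt W)$ over all windows, which at $W=\hat\gamma_T^{1/4}T^{1/2}$ equals $O(\hat\gamma_T^{7/8}T^{3/4})$. The dual term and the quadratic term are handled by the projected-ascent inequality $-\phi_t\hat g_t(x_t)\le \frac{\phi_t^2-\phi_{t+1}^2}{2\eta}+\frac{\eta}{2}\hat g_t(x_t)^2$; telescoping, together with $\eta=\rho/(G\sqrt T)$ and $|\hat g_t(x_t)|=O(G)$, produces the $O(\rho G\sqrt T)$ contribution. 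The drift terms are accumulated across windows so that the within-window variations telescope into the budget $B_\Delta$, and carrying the factor $\phi_t\le\rho$ through the constraint-related pieces yields the $\rho\,\hat\gamma_T^{7/8}B_\Delta T^{3/4}$ contribution, giving the stated $R(T)$ bound.

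For the violation $V(T)$ I would use the pessimistic side $g_t(x_t)\le \hat g_t(x_t)+2\tilde\beta_{t-1}\tilde\sigma_{t-1}(x_t)+d_t^g$. Summing, the confidence and drift sums again produce an $\hat\gamma_T^{7/8}B_\Delta T^{3/4}$ term, while $\sum_t\hat g_t(x_t)$ is controlled by the growth of the dual iterate capped at $\rho$, so that $\sum_t\hat g_t(x_t)\le \rho/\eta = G\sqrt T$. To convert the remaining regret-type quantity into a violation bound I would invoke the Slater condition (Assumption~\ref{assump: slater}): because $\rho\ge 4B/\tau$ upper bounds the optimal multiplier, feasibility of $\pi^o_t$ with margin $\tau$ turns a regret-style inequality into a bound on $V(T)$ after dividing by $\rho$, which is the source of the $1/\rho$ factor; combining the direct (coefficient-one) confidence/drift contribution with this divided-by-$\rho$ contribution gives the $(1+\tfrac1\rho)\hat\gamma_T^{7/8}B_\Delta T^{3/4}+G\sqrt T$ bound.

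The hard part will be the first thread: propagating the within-window function drift through the GP posterior mean with a bound tight enough that, after the restart balancing $W=\hat\gamma_T^{1/4}T^{1/2}$ trades the information-gain sum (of order $\gamma_W T/\sqrt W$) against the non-stationarity bias, the drift contribution merges with the information-gain sum to give exactly the $\hat\gamma_T^{7/8}B_\Delta T^{3/4}$ rate rather than a looser term, and doing this simultaneously for the reward and constraint regressions while the single shared dual variable $\phi_t$ couples the two optimistic/pessimistic estimates together.
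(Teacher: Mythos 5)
Your analysis of the regret bound $R(T)$ follows essentially the same route as the paper: a drift-augmented confidence lemma (the paper's Lemma~\ref{lemma: fbound}, imported from \cite{zhou2021no}), the optimism-plus-Lagrangian per-step decomposition using feasibility of $\pi_t^*$ and $\phi_t\ge 0$ (Lemmas~\ref{lemma: Epi_ft} and \ref{lemma: no_xi_bound_t} with $\phi=0$), the telescoping OGD inequality for the dual term, and the per-window Cauchy--Schwarz/information-gain sum balanced against the drift at $W=\hat\gamma_T^{1/4}T^{1/2}$. That part is sound.

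The gap is in the violation bound, and it is precisely the step the paper identifies as its main technical contribution. Your first mechanism, $\sum_t \hat g_t(x_t)\le \rho/\eta$ from the growth of the dual iterate, does not hold as stated: the update is $\phi_{t+1}=Proj_{[0,\rho]}[\phi_t+\eta\hat g_t(x_t)]$, and whenever the projection truncates at the upper end $\rho$ the increment $\eta\hat g_t(x_t)$ is not recorded in $\phi_{t+1}$, so the telescoping only gives $\sum_t\eta\hat g_t(x_t)\le \rho + \eta G\cdot|\{t:\text{truncation active}\}|$, which can be linear in $T$. Your second mechanism, ``invoke Slater and divide the regret-style inequality by $\rho$,'' is the time-average argument of Lemma 9 in \cite{ding2021provably}: it forms the mixture policy $\bar\pi=\frac1T\sum_t\tilde\pi_t$ and uses $\mathbb{E}_{\bar\pi}g(x)=\frac1T\sum_t g(x_t)$, which requires a \emph{single} constraint function $g$. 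When $g_t$ varies with $t$ there is no single policy whose expected constraint equals the time-averaged realized constraint, so this conversion fails. The paper's fix (Lemma~\ref{lem: new gt}) is to run the sensitivity analysis of the value function $v(\tau)$ \emph{per time slot} with the point-mass policy $\tilde\pi_t=\delta_{x_t}$ and $-\tilde\tau=g_t(x_t)$, obtaining $g_t(x_t)\le 2\delta(x_t,\rho)/\rho$ for every $t$; this requires a two-case analysis on the sign of $g_t(x_t)$ (since the per-slot surplus $\delta(x_t,\rho)$, unlike the time-averaged one, can be negative because of the telescoping dual terms), and only then is the bound summed over $t$. Without this per-slot lemma your outline does not close, even though the final expression you wrote down is correct.
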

	\begin{proof}
		The complete proof is provided in Section \ref{sec: proof}.
	\end{proof}

	If the combined variation budget $B_{\Delta}$ is known,  the order of $B_{\Delta}$ in the upper bounds can be reduced from $1$ to $\frac{1}{4}$, with a different value of restart period $W$.

	\begin{corollary} \label{cor: bound}
		If restart period $W= \hat \gamma_T^{\frac{1}{4}} T^{\frac{1}{2}} B_{\Delta}^{-\frac{1}{2}}$, we have tighter upper bounds for both regret and constraint violation.
		\begin{align*}
			R(T) &\leq  O  \left(\rho G  \sqrt{T} + \rho \hat \gamma_T^{\frac{7}{8}} B_{\Delta}^{\frac{1}{4}} T^{\frac{3}{4}}\right).\\
			V(T) &\leq O  \left( \left(1+\frac{1}{\rho}\right) \hat \gamma_T^{\frac{7}{8}} B_{\Delta}^{\frac{1}{4}} T^{\frac{3}{4}} + G\sqrt{T} \right).
		\end{align*}
	\end{corollary}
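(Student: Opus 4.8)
The plan is to reuse the analysis of Theorem~\ref{thm: bound} essentially verbatim, but to leave the restart length $W$ as a free parameter until the very last step. Inspecting that proof, before any particular value of $W$ is substituted, it delivers a bound of the form
\begin{align*}
R(T) \leq O\!\left( \rho G \sqrt{T} + \rho\Big( \tfrac{T\,\hat{\gamma}_T}{\sqrt{W}} + \hat{\gamma}_T^{1/2}\, W^{3/2}\, B_{\Delta} \Big) \right),
\end{align*}
in which the two $W$-dependent summands pull in opposite directions: the term $T\hat{\gamma}_T/\sqrt{W}$ is the accumulated within-window UCB (estimation) regret, which shrinks as $W$ grows because each restart interval then carries more samples; the term $\hat{\gamma}_T^{1/2}W^{3/2}B_{\Delta}$ is the drift (non-stationarity) penalty, which grows with $W$ because longer intervals let $f_t$ and $g_t$ wander farther from the function the GP posterior is implicitly tracking. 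The leading $\rho G\sqrt{T}$ term, and for the violation the additive $G\sqrt{T}$ together with the prefactor $(1+1/\rho)$, originate purely from the dual update with step size $\eta=\rho/(G\sqrt{T})$ and are \emph{independent} of $W$, so they ride through unchanged.

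First I would minimize the bracketed expression over $W$. Balancing the two competing terms, i.e.\ solving $T\hat{\gamma}_T/\sqrt{W}=\hat{\gamma}_T^{1/2}W^{3/2}B_{\Delta}$, gives $W^{2}=\hat{\gamma}_T^{1/2}\,T/B_{\Delta}$, hence the minimizer $W^*=\hat{\gamma}_T^{1/4}T^{1/2}B_{\Delta}^{-1/2}$, which is precisely the restart length prescribed in the statement. Substituting $W^*$ back, both summands collapse to $\hat{\gamma}_T^{7/8}B_{\Delta}^{1/4}T^{3/4}$, giving $R(T)\leq O(\rho G\sqrt{T}+\rho\,\hat{\gamma}_T^{7/8}B_{\Delta}^{1/4}T^{3/4})$ as claimed. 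By contrast, the $B_{\Delta}$-agnostic choice $W=\hat{\gamma}_T^{1/4}T^{1/2}$ of Theorem~\ref{thm: bound} balances the two terms only with $B_{\Delta}$ treated as a constant, leaving the drift penalty carrying the full factor $B_{\Delta}$; since $B_{\Delta}\geq 1$ one then crudely bounds the estimation term by the larger drift term to report the single rate $\hat{\gamma}_T^{7/8}B_{\Delta}T^{3/4}$. Knowing $B_{\Delta}$ is exactly what lets us shorten the windows by the factor $B_{\Delta}^{-1/2}$ and trade the exponent on $B_{\Delta}$ down from $1$ to $1/4$.

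The constraint-violation bound is handled identically: its proof produces the \emph{same} $W$-dependent core $T\hat{\gamma}_T/\sqrt{W}+\hat{\gamma}_T^{1/2}W^{3/2}B_{\Delta}$ (now carrying the prefactor $1+1/\rho$) plus the $W$-independent $G\sqrt{T}$, so the same minimizer $W^*$ simultaneously optimizes both quantities and yields $V(T)\leq O((1+1/\rho)\hat{\gamma}_T^{7/8}B_{\Delta}^{1/4}T^{3/4}+G\sqrt{T})$. The step I expect to require the most care is verifying that the intermediate bound of Theorem~\ref{thm: bound} remains valid verbatim under the shorter window. Concretely, the confidence multipliers $\beta_{t-1},\tilde{\beta}_{t-1}$ depend on the within-window horizon through $\gamma_{t-t_0}\leq\gamma_W\leq\hat{\gamma}_T$ and $\tilde{\gamma}_{t-t_0}\leq\hat{\gamma}_T$, and every such estimate must continue to hold when $W$ is decreased; this monotonicity is in our favour, but one must additionally confirm the regime is nondegenerate, i.e.\ $1\leq W^*\leq T$. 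The lower bound $W^*\geq 1$ fails only when $B_{\Delta}$ is so large (of order $\hat{\gamma}_T^{1/2}T$) that the environment changes at essentially every slot, in which case one restarts every step and the bound degrades gracefully to the trivial linear rate; this boundary case should be stated explicitly. Finally, since the Slater-based range $\phi^*\in[0,2B/\tau]$ and the dual-ascent telescoping that produce the $\rho G\sqrt{T}$ and $(1+1/\rho)$ factors never reference $W$, they transfer without modification.
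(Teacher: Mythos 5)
Your proposal is correct and follows essentially the same route as the paper: the paper's own proofs of the regret and violation lemmas keep $W$ symbolic until the end and then simply substitute $W=\hat{\gamma}_T^{1/4}T^{1/2}B_{\Delta}^{-1/2}$ (citing \cite{zhou2021no}) to trade the exponent on $B_{\Delta}$ from $1$ down to $1/4$, exactly as you do by balancing $T\hat{\gamma}_T/\sqrt{W}$ against $\hat{\gamma}_T^{1/2}W^{3/2}B_{\Delta}$. Your additional remarks on the $W$-independence of the dual-update terms and the nondegeneracy check $1\leq W\leq T$ are sensible refinements the paper leaves implicit.
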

	
	\begin{remark}
		Our dynamic regret bounds match the order of the regret in the nonstationary Gaussian Process bandits without any constraints \cite{deng2022weighted,zhou2021no,zhao2021non}. It also generalizes the kernelized bandits with constraints \cite{zhou2022kernelized} in stationary environments to the time-varying case.
	\end{remark}

	\section{Numerical Results}
	
	\begin{figure*}
		\centering
		\begin{subfigure}{.33\textwidth}
			\centering
			\includegraphics[width=6cm, height=4cm]{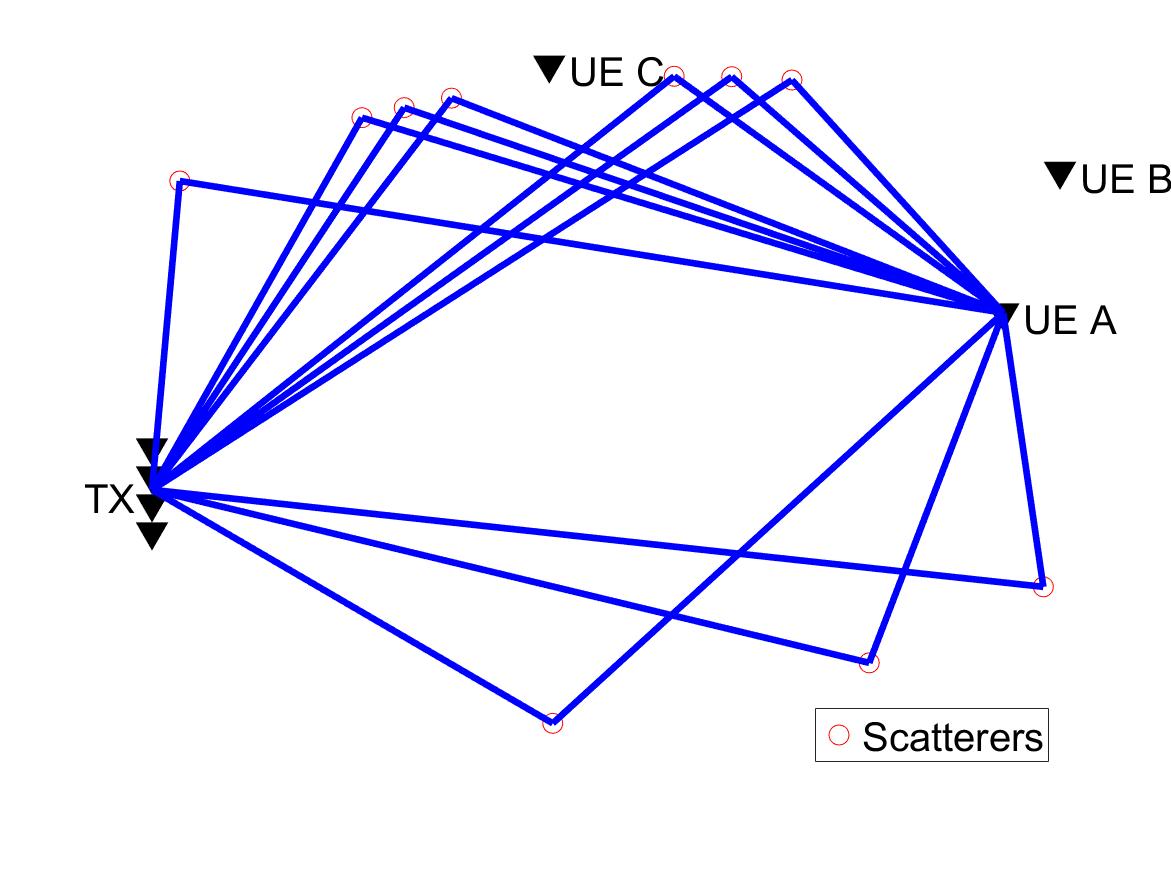}
			\caption{Channel with 10 NLOS paths}
		\end{subfigure}%
		\begin{subfigure}{.33\textwidth}
			\centering
			\includegraphics[width=6cm, height=4cm]{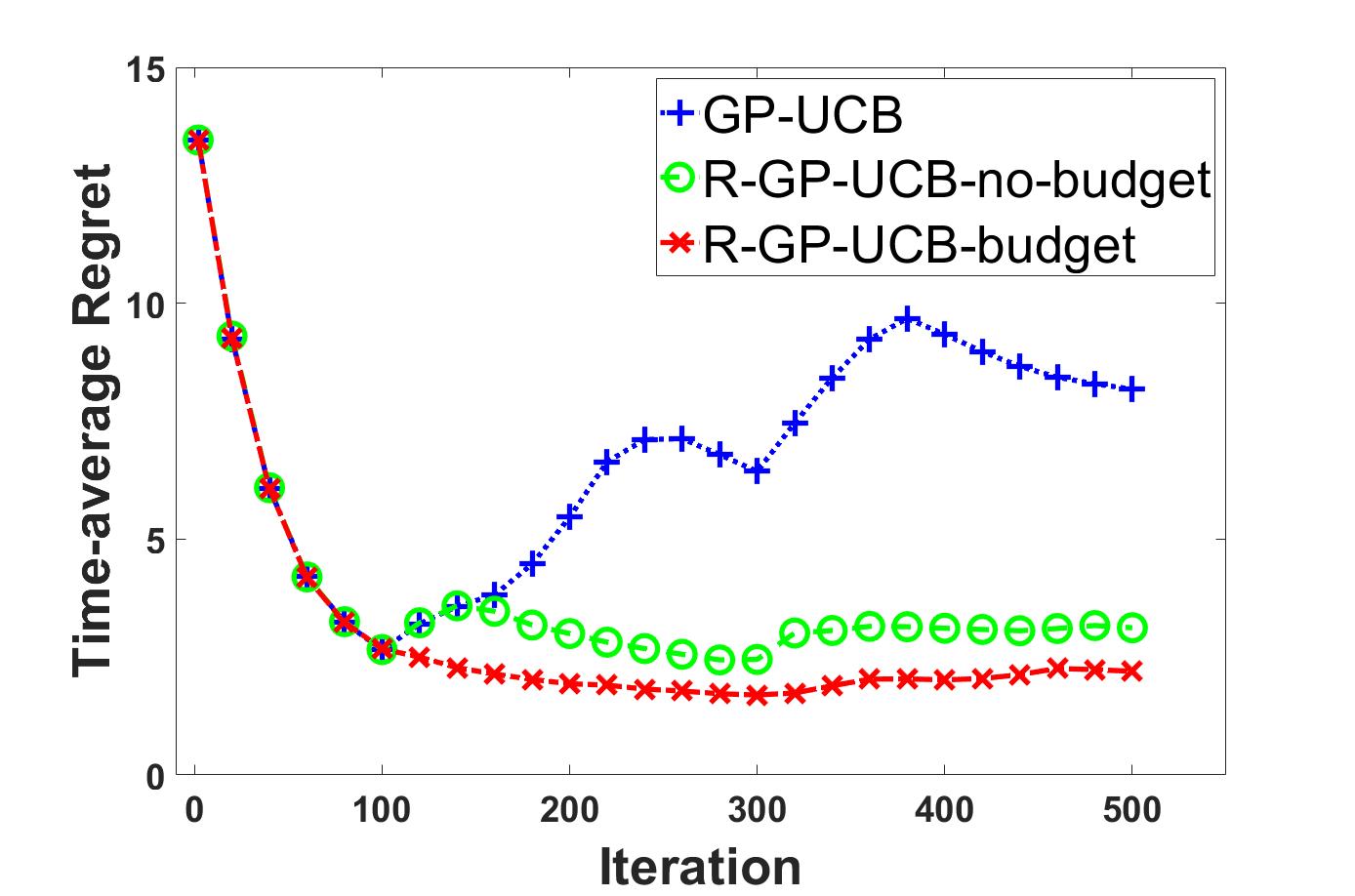}
			\caption{Time-average Regret, Abruptly-change}
		\end{subfigure}%
		\begin{subfigure}{.33\textwidth}
			\centering
			\includegraphics[width=6cm, height=4cm]{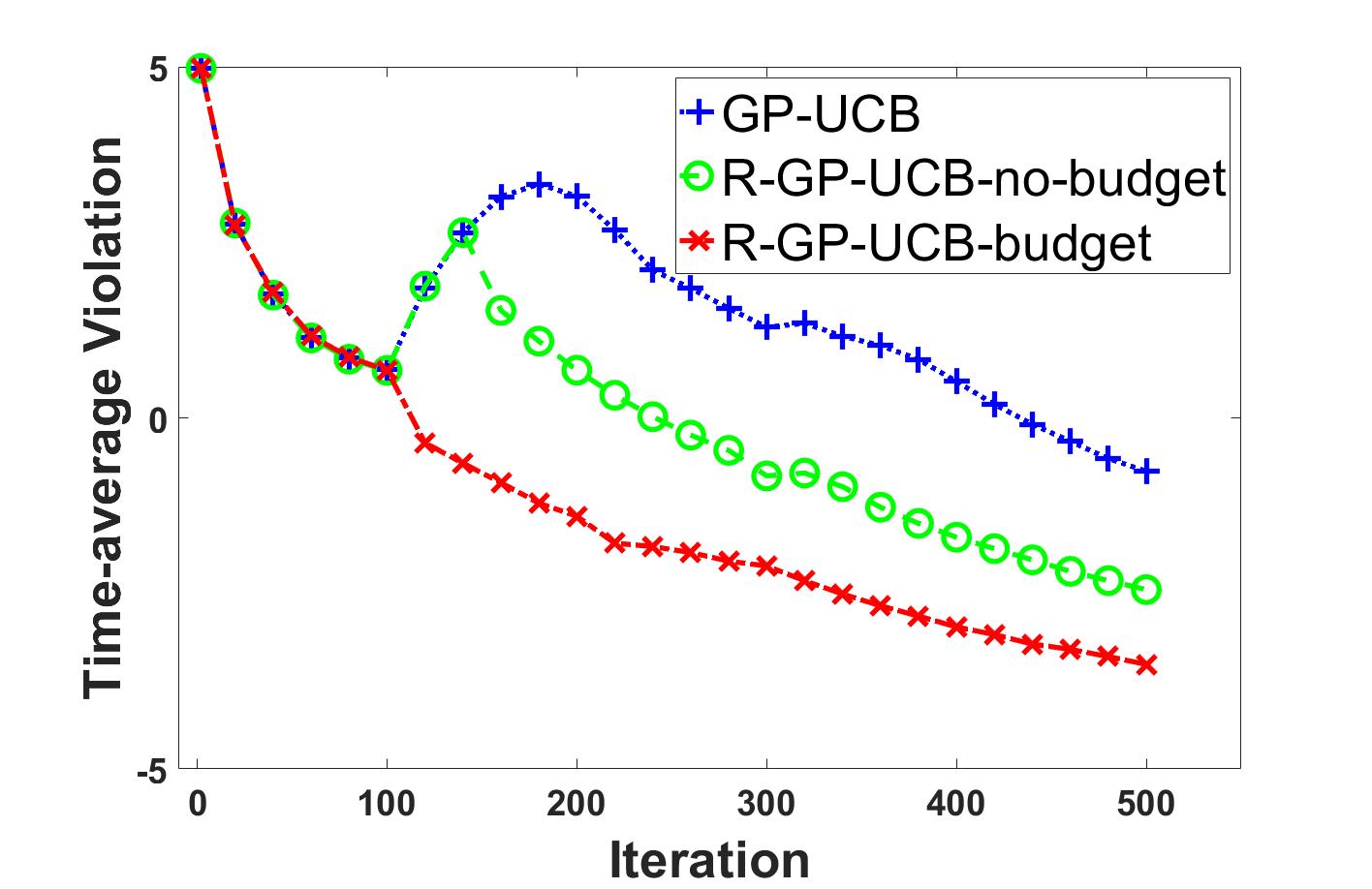}
			\caption{Time-average Violation, Abruptly-change}
		\end{subfigure}%
		\caption{Time-average Performance of two algorithms under abruptly change at time 100 and 300}
		\label{fig: simulation1}
	\end{figure*}
	
	\begin{figure*}
		\centering
		\begin{subfigure}{.5\textwidth}
			\centering
			\includegraphics[width=6cm, height=4cm]{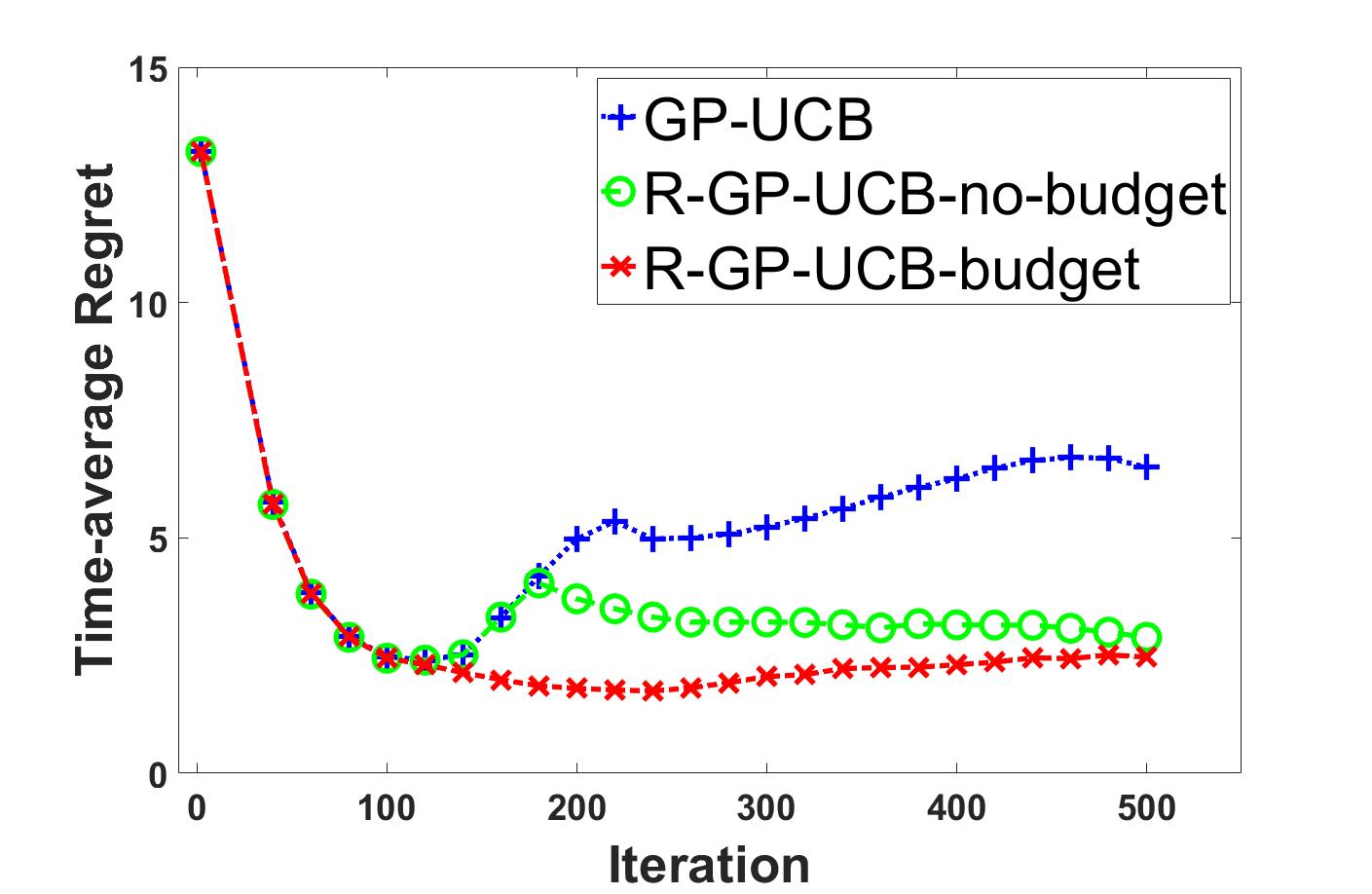}
			\caption{Time-average Regret, Slowly change}
		\end{subfigure}%
		\begin{subfigure}{.5\textwidth}
			\centering
			\includegraphics[width=6cm, height=4cm]{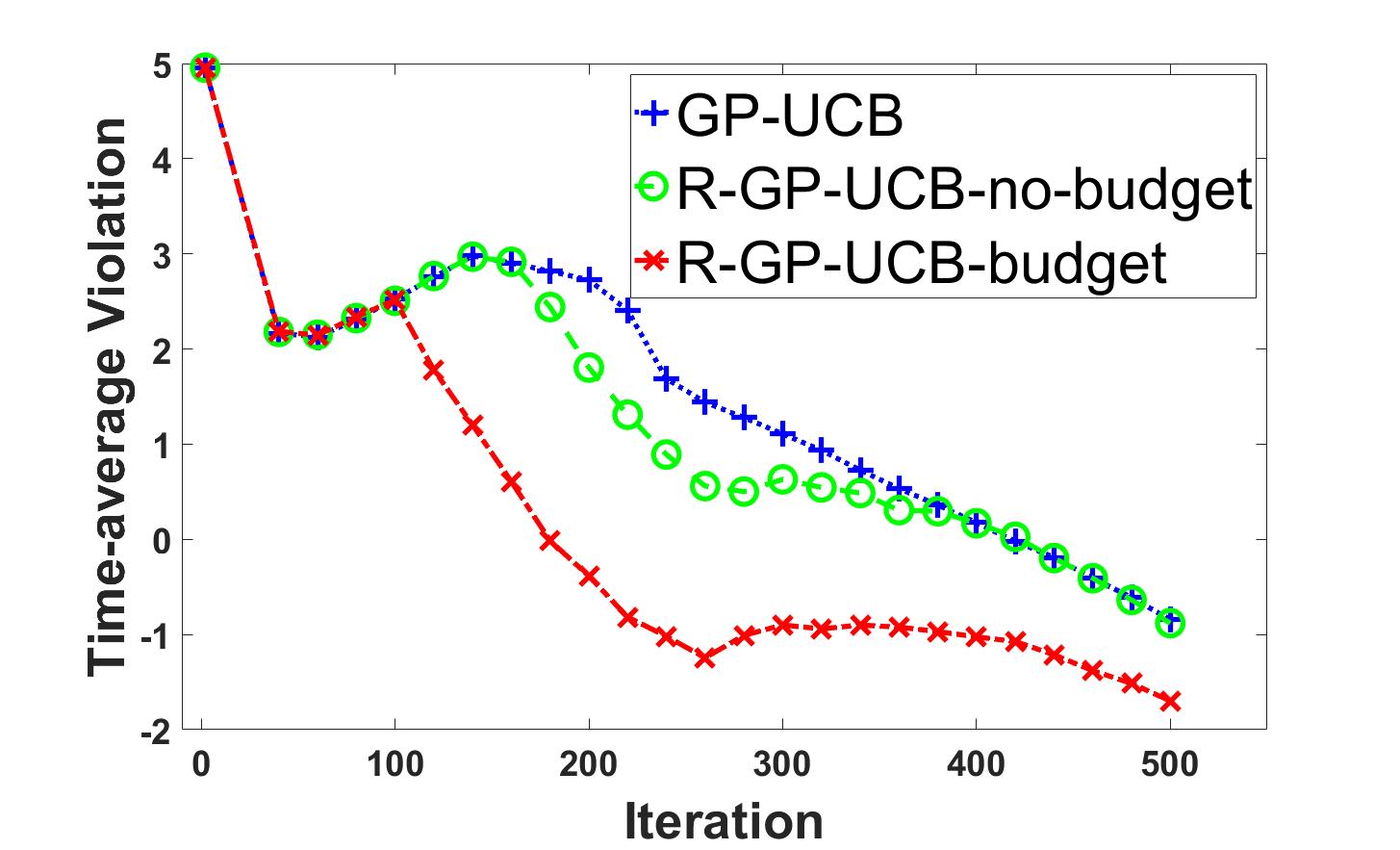}
			\caption{Time-average Violation, Slowly change}
		\end{subfigure}%
		\caption{Time-average Performance of two algorithms under slowly change scenarios}
		\label{fig: simulation2}
	\end{figure*}

	We compare the performance of the proposed algorithm with GP-UCB algorithm with Constraints (i.e. CKB-UCB algorithm in \cite{zhou2022kernelized}). We perform the comparison under both abruptly-changing environments and the slowly-varying scenarios. We use the Phased Array System Toolbox in Matlab to simulate the beamforming vector $x_t$ and channel $H_t$. As shown in Figure  \ref{fig: simulation1} (a), we assume the transmitter has 4 antennas in the uniform linear array. The channel consists of 10 NLOS path and the carrier frequency is 60GHz. We evaluate 100 beam candidates and use Squared Exponential kernel with scale parameter $l=1$ to capture the correlation between $x_t$. The received signal strength function $f$ and interference constraint function $g$ are achieved through Monte Carlo simulation, where path gains change due to moving of UEs.
	
	The first experiment is conducted in the abruptly-changing environments, where channel changes happen at time slot $100$ and $300$. As shown in Figure \ref{fig: simulation1}, GP-UCB has a significant increase in the time-average regret because of changes in channel gains. Our proposed R-GP-UCB overcomes it whenever the variation budget $B_{\Delta}$ is known or not. The time-average interference violations increases at time $100$, which means the constraint is violated. After some time slots, it keeps decreasing, which indicates that the adaptively chosen beamforming vector $x_t$ satisfies the constraint $g_t(x_t)\leq 0$ on average. The second experiment corresponds to a slowly-varying scenario, where all UEs keep moving at a random direction at each time slot. The received signal $f_t$ and the constraint $g_t$ are simulated per time slot and keep changing.

	\if 0
	
	\begin{figure*}
		\centering
		\begin{subfigure}{.25\textwidth}
			\centering
			\includegraphics[width=4.6cm, height=3cm]{new_figures/regret_abruptly_change.jpg}
			\caption{Time-average Regret, Abruptly-change}
		\end{subfigure}%
		\begin{subfigure}{.25\textwidth}
			\centering
			\includegraphics[width=4.6cm, height=3cm]{new_figures/violation_abruptly_change.jpg}
			\caption{Time-average Violation, Abruptly-change}
		\end{subfigure}%
		\begin{subfigure}{.25\textwidth}
			\centering
			\includegraphics[width=4.6cm, height=3cm]{new_figures/regret_slowly_change.jpg}
			\caption{Time-average Regret, Slowly change}
		\end{subfigure}%
		\begin{subfigure}{.25\textwidth}
			\centering
			\includegraphics[width=4.6cm, height=3cm]{new_figures/violation_slowly_change.jpg}
			\caption{Time-average Violation, Slowly change}
		\end{subfigure}%
		\caption{Time-average Performance of algorithms under abruptly-changing environments and slowly-varying scenarios.}
		\label{fig: simulation}
	\end{figure*}
	
	\fi

	\section{Proofs of Theorems} \label{sec: proof}
	
	
	Our main results are obtained via novel technical contributions. In particular, a direct combination of techniques from constrained bandits~\cite{zhou2022kernelized}  and non-stationary bandits~\cite{zhou2021no} would fail. This is because the standard `time-average' trick \cite{ding2021provably} does not hold when the unknown functions are time-varying. To this end, we introduce a novel upper bound on the constraint violation \emph{per time slot}. This new per-time-slot bound can be either positive or negative, while previous time-average bound is always positive. This technique could also be of independent interest for constrained reinforcement learning. Specifically, we state a proof of Theorem \ref{thm: bound} and Corollary~\ref{cor: bound} respectively. The key idea is to construct a decomposition in Lemma \ref{lemma: no_xi_bound_t} via dual variable update and concentration inequality. This decomposition yields a bound on $R(T)$ by choosing $\phi=0$ and a bound on $V(T)$ by choosing $\phi=\rho$. The novel upper bound on the constraint violation per time slot is  Lemma  \ref{lem: new gt}.
	

	

	\subsection{Proof of the Regret Bound $R(T)$}
	
	In this subsection, we focus on establishing the regret bound. We first state the concentration inequality  for non-stationary UCB algorithms, which bounds the gap between estimator $\mu_{t-1}(x)$ and true reward value $f_t(x)$. In comparison to the stationary case, the non-stationary UCB algorithm introduces an extra term $\Delta f_t$. A similar result holds for the constraint function $g_t$, where kernel $k$ is replaced by kernel $\tilde k$.
	
	\begin{lemma} {\label{lemma: fbound}}
		We have the following bound for any $x$:
		\begin{align*}
			&|\mu_{t-1} (x) - f_t(x)| \leq \Delta f_t  + \beta_t \sigma_{t-1}(x),  \\
			\textit{where }& \Delta f_t := 1/\lambda \sqrt{2 (1+\lambda) W \gamma_W} \sum_{s=t_0}^{t-1} ||f_s - f_{s+1} ||_{\mathbb H_k}.
		\end{align*}
		Similarly, for $g_t$ we have $|\tilde \mu_{t-1} (x) - g_t(x)| \leq \Delta g_t  + \tilde \beta_t \tilde \sigma_{t-1}(x).$ and $ \Delta g_t := 1/\lambda \sqrt{2 (1+\lambda) W \tilde \gamma_W} \sum_{s=t_0}^{t-1} ||g_s - g_{s+1} ||_{\mathbb H_{\tilde k}}.$
	\end{lemma}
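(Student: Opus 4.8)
The plan is to reduce this non-stationary concentration bound to the \emph{stationary} kernelized-bandit inequality plus an explicit deterministic bias that accounts for the drift of $f_s$ away from the target $f_t$ inside the current restart window $[t_0,t-1]$. First I would introduce the hypothetical posterior mean $\mu^{f_t}_{t-1}(x)$ that one would obtain if every observation in the window were a noisy sample of the \emph{single} function $f_t$, i.e.\ built from $\tilde r_s = f_t(x_s)+n_s$ instead of $r_s=f_s(x_s)+n_s$. Writing $\mu_{t-1}(x)-f_t(x) = \big(\mu_{t-1}(x)-\mu^{f_t}_{t-1}(x)\big) + \big(\mu^{f_t}_{t-1}(x)-f_t(x)\big)$, the second bracket is exactly the stationary setting, so the self-normalized Chowdhury--Gopalan concentration bound (the one that motivates the choice of $\beta_t$ in Algorithm~\ref{alg}) gives $|\mu^{f_t}_{t-1}(x)-f_t(x)|\le \beta_t\,\sigma_{t-1}(x)$ with high probability. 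Since the additive noise is identical in $r_s$ and $\tilde r_s$, it cancels in the first bracket, leaving the purely deterministic bias
\[
\mu_{t-1}(x)-\mu^{f_t}_{t-1}(x) = k_{t-1}(x)^{\transpose}(K_{t-1}+\lambda I)^{-1}\delta,\qquad \delta_s := f_s(x_s)-f_t(x_s).
\]

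Next I would bound this bias. Two elementary facts control $\delta$: by the reproducing property and Cauchy--Schwarz, $|\delta_s| = |\langle f_s-f_t,\,k(x_s,\cdot)\rangle_{\mathbb{H}_k}| \le \|f_s-f_t\|_{\mathbb{H}_k}\sqrt{k(x_s,x_s)}\le \|f_s-f_t\|_{\mathbb{H}_k}$ (using $k(x,x)\le 1$ from the Boundedness assumption), and the triangle inequality telescopes this to $|\delta_s|\le\sum_{u=t_0}^{t-1}\|f_u-f_{u+1}\|_{\mathbb{H}_k}=:\Delta$. To extract the information-gain factor I would pass to a feature map $\phi$ with $k(x,x')=\langle\phi(x),\phi(x')\rangle$ and $V_{t-1}=\sum_{s=t_0}^{t-1}\phi(x_s)\phi(x_s)^{\transpose}+\lambda I$; the identity $k_{t-1}(x)^{\transpose}(K_{t-1}+\lambda I)^{-1}\delta = \phi(x)^{\transpose}V_{t-1}^{-1}\sum_s\delta_s\phi(x_s)$ lets me split by Cauchy--Schwarz in the $V_{t-1}^{-1}$ inner product. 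The first factor obeys $\|V_{t-1}^{-1/2}\phi(x)\|^2=\phi(x)^{\transpose}V_{t-1}^{-1}\phi(x)=\sigma_{t-1}^2(x)/\lambda\le 1/\lambda$. For the second factor I would use $|\delta_s|\le\Delta$ together with covariance monotonicity $V_{s-1}\preceq V_{t-1}$, so that $\phi(x_s)^{\transpose}V_{t-1}^{-1}\phi(x_s)\le\phi(x_s)^{\transpose}V_{s-1}^{-1}\phi(x_s)=\sigma_{s-1}^2(x_s)/\lambda$, giving $\big\|V_{t-1}^{-1/2}\sum_s\delta_s\phi(x_s)\big\|\le \tfrac{\Delta}{\sqrt\lambda}\sum_{s=t_0}^{t-1}\sigma_{s-1}(x_s)$.

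It then remains to convert $\sum_s\sigma_{s-1}(x_s)$ into $\gamma_W$. By Cauchy--Schwarz over the at most $W$ rounds, $\sum_s\sigma_{s-1}(x_s)\le\sqrt{W\sum_s\sigma_{s-1}^2(x_s)}$, and the telescoping of $\log\det(I+\lambda^{-1}K_W)$ combined with the elementary inequality $z\le(1+\lambda)\log(1+\lambda^{-1}z)$ for $z\in[0,1]$ (applied to $z=\sigma_{s-1}^2(x_s)\le k(x_s,x_s)\le 1$) yields $\sum_s\sigma_{s-1}^2(x_s)\le 2(1+\lambda)\gamma_W$ via the definition of the maximum information gain. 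Multiplying the two factors gives a bias bounded by $\tfrac1\lambda\sqrt{2(1+\lambda)W\gamma_W}\,\Delta$, which is precisely $\Delta f_t$; adding back the stationary term $\beta_t\sigma_{t-1}(x)$ completes the bound. The statement for $g_t$ follows verbatim with $k,\gamma_W,f$ replaced by $\tilde k,\tilde\gamma_W,g$.

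The hard part will be establishing the bias bound with the sharp $\sqrt{\gamma_W}$ dependence rather than a crude $\sqrt{W}$: a direct estimate such as $|k_{t-1}(x)^{\transpose}(K_{t-1}+\lambda I)^{-1}\delta|\le \|\delta\|_2/\sqrt\lambda$ only produces an $O(W)$ factor. The improvement hinges on (i) the covariance-monotonicity step that replaces the global $V_{t-1}$-norms by the per-round online variances $\sigma_{s-1}(x_s)$, and (ii) carefully tracking the constant $2(1+\lambda)$ through the information-gain inequality. A secondary point requiring care is the noise cancellation in the two-posterior decomposition, since it is exactly what keeps the bias deterministic and confines all randomness to the single high-probability event of the stationary concentration bound.
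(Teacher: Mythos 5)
Your proof is correct and follows essentially the same route as the paper, which simply defers to Lemma~1 of the cited restart GP-UCB reference: the two-posterior decomposition into a stationary Chowdhury--Gopalan concentration term plus a deterministic drift bias, with the feature-map identity, covariance monotonicity, and the $\sum_s\sigma_{s-1}^2(x_s)\le 2(1+\lambda)\gamma_W$ information-gain step reproducing the exact constant $\tfrac{1}{\lambda}\sqrt{2(1+\lambda)W\gamma_W}$ in $\Delta f_t$. The only cosmetic caveat is that the $\beta_t\sigma_{t-1}(x)$ part holds with probability $1-\delta$ rather than deterministically, which the lemma statement itself also leaves implicit.
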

	\begin{proof}
		This lemma is the same  as Lemma 1 in \cite{zhou2021no}. 
	\end{proof}

	Leverage the maximizing action at the  acquisition function $\hat z_{\phi_t} (x)$, we have the following result.

	\begin{lemma} \label{lemma: Epi_ft} 
		The estimator $\hat f_t(x_t)$ is bounded as 
		\begin{align*}
			\mathbb{E}_{\pi^*_t} f_t(x) - \hat f_t(x_t) \leq -\phi_t \hat g_t(x_t) + \phi_t \Delta g_t + \Delta f_t. 
		\end{align*}
	\end{lemma}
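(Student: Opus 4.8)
The plan is to combine three ingredients: the \emph{optimism} of the reward surrogate $\hat f_t$, the \emph{pessimism} of the constraint surrogate $\hat g_t$, and the fact that $x_t$ maximizes the acquisition $\hat z_{\phi_t}$. These three facts, once established, fit together by a single rearrangement.

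First I would establish pointwise confidence bounds on the two surrogate functions. Since $\|f_t\|_{\mathbb H_k}\le B$ and $k(x,x)\le 1$ give $|f_t(x)|\le B$, clipping $\mu_{t-1}(x)$ to $[-B,B]$ can only move it closer to $f_t(x)$, i.e.\ the projection is non-expansive toward the true bounded function. Combined with Lemma~\ref{lemma: fbound} this yields the optimism inequality $\hat f_t(x)\ge f_t(x)-\Delta f_t$ for every $x\in X$. An identical argument for the constraint, now using $|g_t(x)|\le G$ and the negative sign in the definition of $\hat g_t$, produces the pessimism inequality $\hat g_t(x)\le g_t(x)+\Delta g_t$ for every $x$.

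Next I would invoke feasibility of the benchmark policy. Taking expectations of the two pointwise bounds under $\pi^*_t$ gives $\mathbb E_{\pi^*_t} f_t(x)\le \mathbb E_{\pi^*_t}\hat f_t(x)+\Delta f_t$ and $\mathbb E_{\pi^*_t}\hat g_t(x)\le \mathbb E_{\pi^*_t} g_t(x)+\Delta g_t\le \Delta g_t$, where the last step uses $\mathbb E_{\pi^*_t} g_t(x)\le 0$ by definition of $\pi^*_t$. Then, since $x_t=\arg\max_{x}\hat z_{\phi_t}(x)$, for every $x$ we have $\hat f_t(x_t)-\phi_t\hat g_t(x_t)\ge \hat f_t(x)-\phi_t\hat g_t(x)$; because the left-hand side is deterministic, averaging the right-hand side over $\pi^*_t$ gives $\hat f_t(x_t)-\phi_t\hat g_t(x_t)\ge \mathbb E_{\pi^*_t}\hat f_t(x)-\phi_t\,\mathbb E_{\pi^*_t}\hat g_t(x)$. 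Substituting the expectation bound on $\mathbb E_{\pi^*_t} f_t(x)$ and using $\phi_t\ge 0$ so that $\phi_t\,\mathbb E_{\pi^*_t}\hat g_t(x)\le \phi_t\Delta g_t$ rearranges directly into the claimed inequality.

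I expect no single step to be a deep obstacle; the difficulty is entirely in the bookkeeping. The one point requiring genuine care is the projection argument—one must verify that clipping $\mu_{t-1}$ (resp.\ $\tilde\mu_{t-1}$) is non-expansive toward the true bounded function, which is exactly why Assumption on boundedness enters. A secondary subtlety is reconciling the confidence radius $\beta_{t-1}\sigma_{t-1}$ used in $\hat f_t$ with the radius appearing in Lemma~\ref{lemma: fbound}, and making sure the non-stationarity terms $\Delta f_t$ and $\Delta g_t$ propagate additively through the two expectations rather than being absorbed or double-counted.
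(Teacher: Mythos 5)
Your proposal is correct and follows essentially the same route as the paper: optimism of $\hat f_t$ and pessimism of $\hat g_t$ via Lemma~\ref{lemma: fbound}, feasibility of $\pi^*_t$ to bound $\mathbb{E}_{\pi^*_t}\hat g_t(x)\le\Delta g_t$, and the argmax property of $x_t$ for the acquisition $\hat z_{\phi_t}$, combined with $\phi_t\ge 0$. The only difference is that you spell out the projection/clipping step explicitly, which the paper leaves implicit.
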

	
	\begin{proof}
		As $x_t = \arg \max_{x \in X} \hat z_{\phi_t}(x)$, from the definition of $\hat z_{\phi_t}(x)$, we have $    \mathbb{E}_{\pi^*_t} \hat f_t(x) - \phi_t \mathbb{E}_{\pi^*_t} \hat  g_t(x) \leq \hat f_t(x_t) - \phi_t \hat g_t(x_t).$
		From Lemma \ref{lemma: fbound} and $\mathbb{E}_{\pi^*_t} g_t(x) \leq 0$, we have 
		\begin{align*}
			\mathbb{E}_{\pi^*_t} f_t(x) &\leq \mathbb{E}_{\pi^*_t} \hat  f_t(x) + \Delta f_t,\\
			\mathbb{E}_{\pi^*_t} \hat g_t(x) &\leq  \mathbb{E}_{\pi^*_t} g_t(x)  + \Delta g_t  \leq  \Delta g_t.
		\end{align*}
		Combining above three inequalities, we have
		\begin{align*}
			\mathbb{E}_{\pi^*_t} f_t(x) &\leq \hat f_t(x_t) - \phi_t \hat g_t(x_t) + \phi_t \mathbb{E}_{\pi^*_t} \hat g_t(x) + \Delta f_t\\
			&\leq \hat f_t(x_t) - \phi_t \hat g_t(x_t) + \phi_t \Delta g_t   +\Delta f_t.
		\end{align*} 
		This concludes the above results.
	\end{proof}

	We obtain the following lemma on the dual variable $\phi_t$.
	\begin{lemma} \label{lemma:no_xi_phigt}
		The estimator $\hat g_t(x_t)$ is bounded as 
		\begin{align*}
			(\phi-\phi_t) \hat g_t(x_t)  \leq   \eta G^2+  \frac{1}{2\eta}(\phi_t -\phi)^2  - \frac{1}{2\eta}(\phi_{t+1} - \phi)^2.
		\end{align*}
	\end{lemma}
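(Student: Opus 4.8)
The plan is to apply the standard one-step projected gradient descent inequality to the dual update $\phi_{t+1} = Proj_{[0,\rho]}[\phi_t + \eta\,\hat g_t(x_t)]$, exploiting that the relevant objective $\phi \mapsto \phi\,\hat g_t(x_t)$ is linear in $\phi$ so that its gradient is the constant $\hat g_t(x_t)$. Since the claimed inequality is a \emph{per-slot} bound, no summation or telescoping is performed inside the lemma; the three-term structure on the right-hand side is exactly what will telescope later when the lemma is summed over $t$. Throughout I fix an arbitrary comparator $\phi \in [0,\rho]$, which is all that is needed because the main proof only instantiates $\phi = 0$ (for $R(T)$) and $\phi = \rho$ (for $V(T)$), both lying in $[0,\rho]$.

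First I would use the non-expansiveness of the Euclidean projection onto the convex set $[0,\rho]$, together with the fixed-point identity $Proj_{[0,\rho]}(\phi) = \phi$ valid for any $\phi \in [0,\rho]$, to write
\begin{align*}
(\phi_{t+1} - \phi)^2 = \left(Proj_{[0,\rho]}[\phi_t + \eta\,\hat g_t(x_t)] - Proj_{[0,\rho]}(\phi)\right)^2 \leq \left(\phi_t + \eta\,\hat g_t(x_t) - \phi\right)^2.
\end{align*}
Expanding the right-hand square as $(\phi_t-\phi)^2 + 2\eta\,\hat g_t(x_t)(\phi_t-\phi) + \eta^2\,\hat g_t(x_t)^2$, rearranging to isolate the cross term, dividing by $2\eta$, and using $(\phi_t-\phi) = -(\phi-\phi_t)$ gives
\begin{align*}
(\phi-\phi_t)\,\hat g_t(x_t) \leq \frac{1}{2\eta}(\phi_t-\phi)^2 - \frac{1}{2\eta}(\phi_{t+1}-\phi)^2 + \frac{\eta}{2}\,\hat g_t(x_t)^2.
\end{align*}

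The only remaining step, and the one requiring the most care, is to control the quadratic remainder $\frac{\eta}{2}\hat g_t(x_t)^2$ by $\eta G^2$; this needs a uniform bound $|\hat g_t(x_t)| \leq \sqrt{2}\,G$, and in fact $|\hat g_t(x_t)| \le G$ already suffices since then $\frac{\eta}{2}\hat g_t(x_t)^2 \le \frac{\eta}{2}G^2 \le \eta G^2$. I would establish such a bound from the Boundedness Assumption: $\|g_t\|_{\mathbb H_{\tilde k}} \le G$ with $\tilde k(x,x) \le 1$ yields $|g_t(x)| \le G$ by the reproducing property and Cauchy--Schwarz, while the construction $\hat g_t(x) = Proj_{[-G,G]}\tilde\mu_{t-1}(x) - \tilde\beta_{t-1}\tilde\sigma_{t-1}(x)$ keeps the estimate within a constant multiple of $G$. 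Verifying this last clipping/confidence-width bound is the genuine technical content, since the subtracted LCB width $\tilde\beta_{t-1}\tilde\sigma_{t-1}$ is the term one must argue does not blow up; substituting it into the previous display then yields exactly the stated inequality. The main pitfalls to watch are the sign bookkeeping in the expansion (the lemma records $(\phi-\phi_t)$, not $(\phi_t-\phi)$) and ensuring the comparator genuinely satisfies $\phi \in [0,\rho]$ so that the projection identity $Proj_{[0,\rho]}(\phi)=\phi$ applies.
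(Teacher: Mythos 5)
Your proof is correct and follows essentially the same route as the paper's: non-expansiveness of the projection onto $[0,\rho]$, expansion of the square, rearrangement of the cross term, and division by $2\eta$, with the quadratic remainder absorbed into $\eta G^2$. The one point you flag as needing care---that $\hat g_t(x) = Proj_{[-G,G]}\tilde\mu_{t-1}(x) - \tilde\beta_{t-1}\tilde\sigma_{t-1}(x)$ is only clipped \emph{before} subtracting the confidence width, so $|\hat g_t(x_t)|\le G$ is not immediate---is a legitimate observation, and the paper's own proof simply asserts $\hat g_t(x_t)\le G$ without addressing it.
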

	
	\begin{proof}
		From the dual update, we have
		\begin{align*}
			(\phi_{t+1} - \phi)^2 &\leq (\phi_t + \eta \hat g_t(x_t) - \phi)^2 \\
			&\leq (\phi_t - \phi)^2 +  2 \eta \hat g_t(x_t) (\phi_t - \phi) + \eta^2 \hat g_t^2(x_t).
		\end{align*}
		
		We prove above result as $\eta >0$ and $\hat g_t(x_t) \leq G$.
	\end{proof}

	In the following, we establish an important bound on $\mathbb{E}_{\pi^*_t} f_t(x) - f_t(x_t) + \phi g_t(x_t)$, which is the sum of instantaneous  regret of $f_t$ and instantaneous violation of $g_t$ weighted by a general dual parameter $\phi$. Both the regret bound $R(T)$ and constraint violation $V(T)$ are developed from this lemma. 
	
	\begin{lemma} \label{lemma: no_xi_bound_t}
		For any $\phi \in [0, \rho]$, we have
		\begin{align*}
			& \mathbb{E}_{\pi^*_t} f_t(x) - f_t(x_t) + \phi g_t(x_t) \leq \delta (x_t, \phi), \\
			\textit{where \space} &\delta (x_t, \phi)= \rho \Delta g_t + 2 \Delta f_t + 2 \beta_t \sigma_{t-1} (x_t) + \phi \Delta g_t  \\
			+2\phi & \tilde \beta_t \tilde \sigma_{t-1}(x_t) + \eta G^2 + \frac{1}{2 \eta} (\phi_t - \phi)^2 - \frac{1}{2 \eta} (\phi_{t+1} - \phi)^2.
		\end{align*}
	\end{lemma}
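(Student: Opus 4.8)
The plan is to chain the three preceding lemmas together, peeling off each piece of $\delta(x_t,\phi)$ in turn. First I would start from Lemma~\ref{lemma: Epi_ft}, which already bounds $\mathbb{E}_{\pi^*_t} f_t(x)$ by $\hat f_t(x_t) - \phi_t \hat g_t(x_t) + \phi_t \Delta g_t + \Delta f_t$. Subtracting $f_t(x_t)$ and adding $\phi g_t(x_t)$ on both sides, the target quantity $\mathbb{E}_{\pi^*_t} f_t(x) - f_t(x_t) + \phi g_t(x_t)$ reduces to controlling three pieces: the reward gap $\hat f_t(x_t) - f_t(x_t)$, the combined dual term $-\phi_t \hat g_t(x_t) + \phi g_t(x_t)$, and the leftover $\phi_t \Delta g_t + \Delta f_t$.

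Next I would bound the reward gap. Since $\|f_t\|_{\mathbb H_k} \leq B$ together with $k(x,x)\leq 1$ forces $f_t(x_t)\in[-B,B]$, projecting the mean estimate onto $[-B,B]$ can only move it closer to $f_t(x_t)$, so Lemma~\ref{lemma: fbound} gives $|Proj_{[-B,B]}\mu_{t-1}(x_t) - f_t(x_t)| \leq \Delta f_t + \beta_t \sigma_{t-1}(x_t)$. Adding the explicit exploration bonus $\beta_{t-1}\sigma_{t-1}(x_t)\le \beta_t \sigma_{t-1}(x_t)$ from the definition of $\hat f_t$ yields $\hat f_t(x_t) - f_t(x_t) \leq \Delta f_t + 2\beta_t \sigma_{t-1}(x_t)$, which supplies the second $\Delta f_t$ and the $2\beta_t\sigma_{t-1}(x_t)$ term of $\delta$.

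I would then convert $\phi g_t(x_t)$ into a statement about $\hat g_t(x_t)$. Because $\hat g_t$ is the pessimistic (lower-confidence) estimate $Proj_{[-G,G]}\tilde \mu_{t-1} - \tilde\beta_{t-1}\tilde\sigma_{t-1}$, the same projection argument applied to $g_t(x_t)\in[-G,G]$ together with Lemma~\ref{lemma: fbound} gives $g_t(x_t) \leq \hat g_t(x_t) + \Delta g_t + 2\tilde\beta_t\tilde\sigma_{t-1}(x_t)$, hence $\phi g_t(x_t) \leq \phi\hat g_t(x_t) + \phi\Delta g_t + 2\phi\tilde\beta_t\tilde\sigma_{t-1}(x_t)$. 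This produces the $\phi\Delta g_t$ and $2\phi\tilde\beta_t\tilde\sigma_{t-1}(x_t)$ terms. Combining the $\phi\hat g_t(x_t)$ just obtained with the $-\phi_t\hat g_t(x_t)$ inherited from Lemma~\ref{lemma: Epi_ft} collapses the two dual terms into $(\phi-\phi_t)\hat g_t(x_t)$, to which I apply Lemma~\ref{lemma:no_xi_phigt} to obtain $\eta G^2 + \frac{1}{2\eta}(\phi_t-\phi)^2 - \frac{1}{2\eta}(\phi_{t+1}-\phi)^2$.

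Finally, the only stray term is $\phi_t\Delta g_t$; since the dual update projects onto $[0,\rho]$ we have $\phi_t\le\rho$ and $\Delta g_t\ge 0$, so $\phi_t\Delta g_t\le\rho\Delta g_t$, matching the leading term of $\delta(x_t,\phi)$. Assembling all the pieces reproduces $\delta(x_t,\phi)$ exactly. I expect the main obstacle to be the sign and direction bookkeeping: $\hat f_t$ is an \emph{upper} confidence bound while $\hat g_t$ is a \emph{lower} one, so the two projection inequalities must be oriented oppositely (upper for the reward gap, lower for the constraint), and the doubling of the exploration bonuses into $2\beta_t\sigma_{t-1}$ and $2\tilde\beta_t\tilde\sigma_{t-1}$ must be tracked carefully. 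A sign slip here would destroy the quadratic telescoping structure in $\phi$ that the later summation over $t$ relies on to cancel the dual terms.
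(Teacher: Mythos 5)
Your proposal is correct and follows essentially the same route as the paper: apply Lemma~\ref{lemma: Epi_ft} to the regret term, use Lemma~\ref{lemma: fbound} (with the projection and exploration-bonus doubling) to bound both $\hat f_t(x_t)-f_t(x_t)$ and $g_t(x_t)-\hat g_t(x_t)$, cancel the two $\phi_t\hat g_t(x_t)$ terms so that Lemma~\ref{lemma:no_xi_phigt} handles $(\phi-\phi_t)\hat g_t(x_t)$, and bound the stray $\phi_t\Delta g_t$ by $\rho\Delta g_t$. All steps and sign orientations match the paper's argument.
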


	\begin{proof}
		From Lemma \ref{lemma: fbound}, Lemma \ref{lemma: Epi_ft} and $\phi_t \leq \rho$, we have
		\begin{align*}
			\mathbb{E}_{\pi^*_t} f_t(x) - f_t(x_t)  &= \mathbb{E}_{\pi^*_t} f_t(x) - \hat f_t(x_t) + \hat f_t(x_t) - f_t(x_t) \\
			\leq -\phi_t & \hat g_t(x_t) + \rho \Delta g_t + 2\Delta f_t + 2\beta_t \sigma_{t-1}(x_t).
		\end{align*}
		
		As $\phi g_t(x_t) = \phi ( g_t(x_t) - \hat g_t(x_t) )+  \phi_t \hat g_t(x_t) + ( \phi \hat g_t(x_t) - \phi_t \hat g_t(x_t) )$, use Lemma \ref{lemma: fbound} for first bracket,  and Lemma \ref{lemma:no_xi_phigt} for last bracket, we have
		\begin{align*}
			\phi g_t(x_t) \leq & \phi \Delta g_t + \phi 2 \tilde \beta_t \tilde \sigma_{t-1}(x_t) + \phi_t \hat g_t(x_t) \\
			&+ \eta G^2+  \frac{1}{2\eta}(\phi_t -\phi)^2  - \frac{1}{2\eta}(\phi_{t+1} - \phi)^2.
		\end{align*}
		We conclude by combining above two inequalities. 
	\end{proof}

	

	Finally, by choosing $\phi=0$, we can separate the instantaneous  regret of $f_t$  from the above decomposition. Then, by summarizing the instantaneous  regret of $f_t$ over time $t$, we achieve the bound on $R(T)$. Two forms of bounds are proved by setting restart period $W$ when $B_{\Delta}$ is known or not.
	\begin{lemma} \label{lemma: regret}
		The regret is bounded as 
		\begin{align*}
			R(T) &= O (\rho G  \sqrt{T} + \rho \hat \gamma_T^{\frac{7}{8}} B_{\Delta} T^{\frac{3}{4}}).
		\end{align*}
	\end{lemma}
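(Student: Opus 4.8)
The plan is to specialize the per-slot decomposition of Lemma~\ref{lemma: no_xi_bound_t} to $\phi = 0$ and then sum over the horizon. Setting $\phi = 0$ annihilates every term that carries an explicit factor of $\phi$ (namely $\phi\Delta g_t$ and $2\phi\tilde\beta_t\tilde\sigma_{t-1}(x_t)$) and leaves
\begin{align*}
\mathbb{E}_{\pi^*_t} f_t(x) - f_t(x_t) \leq \rho\Delta g_t + 2\Delta f_t + 2\beta_t\sigma_{t-1}(x_t) + \eta G^2 + \frac{1}{2\eta}\phi_t^2 - \frac{1}{2\eta}\phi_{t+1}^2.
\end{align*}
Summing the left-hand side over $t = 1,\dots,T$ reproduces exactly $R(T)$, so it remains to bound the sum of the five groups on the right, which I would carry out blockwise over the $\lceil T/W\rceil$ restart intervals.

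Two of these groups are immediate. The constant term contributes $\sum_{t=1}^T \eta G^2 = \eta G^2 T$, and substituting the prescribed step size $\eta = \rho/(G\sqrt{T})$ turns this into the leading $\rho G\sqrt{T}$ summand. The dual term telescopes within each restart block: since the dual variable is reset at the start of every block, $\sum_{t}\big(\frac{1}{2\eta}\phi_t^2 - \frac{1}{2\eta}\phi_{t+1}^2\big)$ collapses to $-\frac{1}{2\eta}\phi_{t_0+W}^2 \leq 0$ per block, so this contribution is non-positive and may be dropped.

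The two remaining groups are where the order $T^{3/4}$ is produced, and I would treat them on each block of length $W$ separately. For the exploration term, a standard information-gain (elliptical potential) argument bounds $\sum_{t\in\text{block}}\sigma_{t-1}(x_t) = O(\sqrt{W\gamma_W})$, while $\beta_t = O(\sqrt{\gamma_W})$ inside a block; multiplying and summing over the $T/W$ blocks yields $\sum_t 2\beta_t\sigma_{t-1}(x_t) = O(\gamma_W\, T/\sqrt{W})$. For the variation term I would unfold the definition of $\Delta f_t$ from Lemma~\ref{lemma: fbound}: within a block $\Delta f_t$ is the prefactor $\frac{1}{\lambda}\sqrt{2(1+\lambda)W\gamma_W}$ times the partial sum $\sum_{s=t_0}^{t-1}\|f_s - f_{s+1}\|_{\mathbb H_k}$, and summing over the $W$ slots of the block makes each increment $\|f_s - f_{s+1}\|$ appear at most $W$ times; summing the resulting per-block variations across blocks telescopes into the global budget $B_f \leq B_\Delta$. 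This gives $\sum_t \Delta f_t = O(W^{3/2}\sqrt{\gamma_W}\,B_\Delta)$, and identically $\sum_t \rho\Delta g_t = O(\rho W^{3/2}\sqrt{\gamma_W}\,B_\Delta)$. Finally I would substitute the restart length $W = \hat\gamma_T^{1/4}T^{1/2}$ together with $\gamma_W \leq \hat\gamma_T$: the exploration term becomes $O(\hat\gamma_T^{7/8}T^{3/4})$ and the variation term becomes $O(\rho\hat\gamma_T^{7/8}B_\Delta T^{3/4})$, and collecting everything delivers the claimed $R(T) = O(\rho G\sqrt{T} + \rho\hat\gamma_T^{7/8}B_\Delta T^{3/4})$.

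The main obstacle is the variation term rather than the exploration term. The subtlety is the double summation hidden inside $\sum_t\Delta f_t$: each per-slot bound already aggregates the drift since the last restart, so summing over the block inflates the budget by a factor of $W$, and it is precisely this extra $W$ (producing the $W^{3/2}$ dependence after combining with the $\sqrt{W\gamma_W}$ prefactor) that trades off against the $T/\sqrt{W}$ decay of the exploration term and pins down the balanced choice $W = \hat\gamma_T^{1/4}T^{1/2}$. One should also verify that the pure exploration contribution $O(\hat\gamma_T^{7/8}T^{3/4})$ is absorbed into the variation term under the standing convention $\rho B_\Delta = \Omega(1)$; otherwise it would survive as a separate, same-order summand.
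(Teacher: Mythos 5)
Your proposal is correct and follows essentially the same route as the paper: specialize Lemma~\ref{lemma: no_xi_bound_t} to $\phi=0$, drop the telescoping dual terms using $\phi_0=0$, and balance the per-block exploration term $O(\gamma_W T/\sqrt{W})$ against the drift term $O(W^{3/2}\sqrt{\gamma_W}B_\Delta)$ under $W=\hat\gamma_T^{1/4}T^{1/2}$ and $\eta=\rho/(G\sqrt{T})$. You actually supply more detail than the paper does (the elliptical-potential bound, the double-sum accounting for $\sum_t\Delta f_t$, and the remark that absorbing the $B_\Delta$-free exploration term into the stated bound implicitly needs $\rho B_\Delta=\Omega(1)$, which the paper's $(\rho+4)\hat\gamma_T^{7/8}B_\Delta T^{3/4}$ step also silently assumes), so there is no gap.
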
	
	
	\begin{proof}
		Use the the definition on  policy $\tilde \pi_t$, $\mathbb{E}_{\tilde \pi_t}f_t(x) = f_t(x_t)$ and Lemma  \ref{lemma: no_xi_bound_t} with $\phi= 0$, we have the instantaneous regret is bounded as 
		\begin{align*}
			&\mathbb{E}_{\pi_t^*} f_t(x) - \mathbb{E}_{\tilde \pi_t}f_t(x)   \\
			&\leq \rho \Delta g_t + 2 \Delta f_t + 2 \beta_t \sigma_{t-1} (x_t) +  \eta G^2 + \frac{1}{2 \eta} \phi_t^2 - \frac{1}{2 \eta} \phi_{t+1}^2. 
		\end{align*}

		As $\phi_0 = 0$, summarize the above inequalities over time $t$,
		\begin{align*}
			R(T) &= \sum_{t=1}^T \mathbb{E}_{\pi_t^*} f_t(x) - f_t(x_t) \\
			& \leq \sumt \rho \Delta g_t + \sumt 2 \Delta f_t + \sumt  2 \beta_t \sigma_{t-1} (x_t) + \eta G^2 T.   \\
		\end{align*}
		
		Let $\hat \gamma_T = \max\{ \gamma_T, \tilde \gamma_T \} $, $B_{\Delta}= \max \{B_g, B_f \}$, $\hat \beta_T = \max\{ \beta_T, \tilde \beta_T \} = O(\hat \gamma_T^{\frac{1}{2}})$, $W= \hat \gamma_T^{\frac{1}{4}} T^{\frac{1}{2}}$, $\eta=\frac{\rho}{G\sqrt{T}}$, we have
		\begin{align*}
			R(T) &\leq \eta G^2 T + (\rho+4) \hat \gamma_T^{\frac{7}{8}} B_{\Delta} T^{\frac{3}{4}} \\
			&\leq O (\rho G  \sqrt{T} + \rho \hat \gamma_T^{\frac{7}{8}} B_{\Delta} T^{\frac{3}{4}}).
		\end{align*}

		If $W= \hat \gamma_T^{\frac{1}{4}} T^{\frac{1}{2}} B_{\Delta}^{-\frac{1}{2}}$, we have a tighter bound $R(T)\leq O \left(\rho G  \sqrt{T} + \rho \hat \gamma_T^{\frac{7}{8}} B_{\Delta}^{\frac{1}{4}} T^{\frac{3}{4}}\right)$ \cite{zhou2021no}.
	\end{proof}

	\subsection{Proof of the Constraint Violation Bound $V(T)$}
	
	In this subsection, we focus on establishing the upper bound of constraint violation $V(T)$. The first lemma is an extension of Lemma \ref{lemma: no_xi_bound_t} by choosing $\phi= \rho$, where $ \rho \geq \frac{4B}{\tau}$ is defined as the upper bound in truncating $\phi_t$.
	
	\begin{lemma} \label{lemma: no_xi_delta}
		We construct a decomposition as
		\begin{align*}
			&\mathbb{E}_{\pi_t^*} f_t(x) - \mathbb{E}_{\tilde \pi_t}f_t(x)  + \rho \mathbb{E}_{\tilde \pi_t}g_t(x) \leq \delta (x_t, \rho), \\
			\textit{where \space} &\delta (x_t, \rho)= \rho \Delta g_t + 2 \Delta f_t + 2 \beta_t \sigma_{t-1} (x_t) + \rho \Delta g_t \\
			+ 2\rho \tilde \beta_t &\tilde \sigma_{t-1}(x_t) + \eta G^2 + \frac{1}{2 \eta} (\phi_t - \rho)^2 - \frac{1}{2 \eta} (\phi_{t+1} - \rho)^2.
		\end{align*}
		
	\end{lemma}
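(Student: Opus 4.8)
The plan is to derive this decomposition as the boundary instance $\phi = \rho$ of Lemma~\ref{lemma: no_xi_bound_t}. Since $\rho \in [0,\rho]$, the hypothesis of that lemma is met for the choice $\phi = \rho$, so I would first simply substitute $\phi = \rho$ into its conclusion to obtain
\begin{align*}
\mathbb{E}_{\pi^*_t} f_t(x) - f_t(x_t) + \rho\, g_t(x_t) \leq \delta(x_t, \rho),
\end{align*}
where $\delta(x_t,\rho)$ is read off from $\delta(x_t,\phi)$ by setting $\phi=\rho$: the term $\phi\Delta g_t$ becomes a second $\rho\Delta g_t$, the term $2\phi\tilde\beta_t\tilde\sigma_{t-1}(x_t)$ becomes $2\rho\tilde\beta_t\tilde\sigma_{t-1}(x_t)$, and the telescoping quadratics become $\frac{1}{2\eta}(\phi_t-\rho)^2 - \frac{1}{2\eta}(\phi_{t+1}-\rho)^2$. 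This already matches the stated $\delta(x_t,\rho)$ verbatim, so no new estimate is needed.

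The only remaining step is to rewrite the left-hand side in terms of the deterministic policy $\tilde\pi_t$ that places all its mass on the chosen arm $x_t$. As used in the proof of Lemma~\ref{lemma: regret}, this policy satisfies $\mathbb{E}_{\tilde\pi_t} f_t(x) = f_t(x_t)$ and, by the identical argument applied to the constraint, $\mathbb{E}_{\tilde\pi_t} g_t(x) = g_t(x_t)$. Substituting these two identities into the displayed inequality turns it into
\begin{align*}
\mathbb{E}_{\pi_t^*} f_t(x) - \mathbb{E}_{\tilde\pi_t} f_t(x) + \rho\, \mathbb{E}_{\tilde\pi_t} g_t(x) \leq \delta(x_t, \rho),
\end{align*}
which is exactly the claim.

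There is no genuine obstacle here: all the analytic content — the optimism of the acquisition $\hat z_{\phi_t}$, the concentration bound of Lemma~\ref{lemma: fbound}, and the dual-update inequality of Lemma~\ref{lemma:no_xi_phigt} — has already been packaged into Lemma~\ref{lemma: no_xi_bound_t}. The one point that warrants care is that $\phi=\rho$ sits at the upper endpoint of the admissible interval $[0,\rho]$; I would confirm that Lemma~\ref{lemma: no_xi_bound_t} is indeed stated for the closed interval (it is, ``for any $\phi \in [0,\rho]$''), so the substitution is legitimate. The purpose of isolating this $\phi=\rho$ case is forward-looking: the term $\rho\,\mathbb{E}_{\tilde\pi_t} g_t(x)$ on the left will later, after summation over $t$ together with the Slater condition (Assumption~\ref{assump: slater}) to control the dual iterate $\phi_t$, let me extract the cumulative constraint violation $V(T)$, mirroring the way $\phi=0$ produced the regret $R(T)$ in Lemma~\ref{lemma: regret}.
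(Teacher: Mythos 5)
Your proposal is correct and follows the paper's own route exactly: the paper also obtains this lemma by substituting $\phi=\rho$ into Lemma~\ref{lemma: no_xi_bound_t} and then using the point-mass policy identities $\mathbb{E}_{\tilde\pi_t} f_t(x) = f_t(x_t)$ and $\mathbb{E}_{\tilde\pi_t} g_t(x) = g_t(x_t)$. Your added check that $\phi=\rho$ lies in the closed admissible interval, and your remark on why this case is isolated for the violation bound, are both consistent with the paper's intent.
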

	
	\begin{proof}
		Use the  definition on  policy $\tilde \pi_t$, $\mathbb{E}_{\tilde \pi_t}f_t(x) = f_t(x_t)$ and $\mathbb{E}_{\tilde \pi_t}g_t(x) = g_t(x_t) = - \tilde \tau$, and Lemma  \ref{lemma: no_xi_bound_t} with $\phi= \rho$.
	\end{proof}

	The following Lemma generalizes  Lemma 9 in  \cite{ding2021provably} and Theorem 42 in \cite{efroni2020exploration}. We build a new upper bound for $g_t(x_t)$ for all time $t$, rather than  time averaged $\frac{1}{T} \sum_t g_t(x_t)$. We note that $\delta(x_t, \rho)$ may be negative, therefore we further generalize the result from positive $g_t(x_t)$ only to the case where $g_t(x_t)$ can be either positive or negative.

	\begin{lemma} \label{lem: new gt}
		If $\rho \geq 2 \phi^*_t$, for $g_t(x_t) \in [-G, G]$ we have 
		\begin{align*}
			g_t(x_t)  = \mathbb{E}_{\tilde \pi_t}g_t(x) \leq \frac{2\delta (x_t, \rho)} {\rho}.
		\end{align*}
	\end{lemma}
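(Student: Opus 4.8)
The plan is to convert the single-slot decomposition of Lemma~\ref{lemma: no_xi_delta} into a statement about $g_t(x_t)$ alone, by cancelling the reward gap with a strong-duality lower bound and then extracting $g_t(x_t)$ using the slack $\rho \geq 2\phi_t^*$. Since $\mathbb{E}_{\tilde\pi_t} f_t(x) = f_t(x_t)$ and $\mathbb{E}_{\tilde\pi_t} g_t(x) = g_t(x_t)$, Lemma~\ref{lemma: no_xi_delta} states
\begin{align*}
\mathbb{E}_{\pi_t^*} f_t(x) - f_t(x_t) + \rho\, g_t(x_t) \leq \delta(x_t, \rho).
\end{align*}
Thus the only missing ingredient is a lower bound on the reward gap $\mathbb{E}_{\pi_t^*} f_t(x) - f_t(x_t)$ expressed through $g_t(x_t)$.

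First I would obtain this lower bound from strong duality of the per-slot baseline problem $\max_{\pi}\{\mathbb{E}_{\pi} f_t(x): \mathbb{E}_{\pi} g_t(x) \leq 0\}$. This problem is linear in the policy distribution $\pi$, and Assumption~\ref{assump: slater} supplies a strictly feasible $\pi_t^o$, so Slater's condition holds and strong duality gives $\mathbb{E}_{\pi_t^*} f_t(x) = \max_{\pi} L(\pi, \phi_t^*)$ with $\phi_t^*$ the optimal dual variable. Evaluating the Lagrangian at the point mass $\tilde\pi_t$ playing $x_t$ yields $f_t(x_t) - \phi_t^* g_t(x_t) = L(\tilde\pi_t, \phi_t^*) \leq \max_{\pi} L(\pi, \phi_t^*) = \mathbb{E}_{\pi_t^*} f_t(x)$, that is
\begin{align*}
\mathbb{E}_{\pi_t^*} f_t(x) - f_t(x_t) \geq -\phi_t^* g_t(x_t).
\end{align*}
The hypothesis $\rho \geq 2\phi_t^*$ is exactly what one gets by further bounding $\phi_t^* \leq 2B/\tau$ via Slater and taking $\rho \geq 4B/\tau$.

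Substituting the lower bound into the decomposition cancels the reward terms and leaves
\begin{align*}
(\rho - \phi_t^*)\, g_t(x_t) \leq \delta(x_t, \rho).
\end{align*}
Now $\rho \geq 2\phi_t^*$ makes the coefficient strictly positive with $\rho - \phi_t^* \geq \rho/2$, so dividing gives $g_t(x_t) \leq \delta(x_t,\rho)/(\rho - \phi_t^*)$. From here the claim $g_t(x_t) \leq 2\delta(x_t,\rho)/\rho$ follows immediately whenever $\delta(x_t,\rho) \geq 0$ (since then $\delta(x_t,\rho)/(\rho-\phi_t^*) \leq \delta(x_t,\rho)/(\rho/2)$), and also whenever $g_t(x_t) \geq 0$ (by chaining $\tfrac{\rho}{2} g_t(x_t) \leq (\rho - \phi_t^*) g_t(x_t) \leq \delta(x_t,\rho)$).

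The main obstacle is the mixed-sign regime the paper explicitly highlights: $g_t(x_t) < 0$ together with $\delta(x_t,\rho) < 0$, which is possible because the telescoping dual term $\tfrac{1}{2\eta}(\phi_t - \rho)^2 - \tfrac{1}{2\eta}(\phi_{t+1} - \rho)^2$ buried in $\delta(x_t,\rho)$ can be negative. This is precisely where the time-averaged arguments of \cite{ding2021provably, efroni2020exploration}, which always keep the violation as a nonnegative $[\cdot]_+$ quantity, do not carry over. My plan for this case is to retain the per-slot inequality $g_t(x_t) \leq \delta(x_t,\rho)/(\rho - \phi_t^*)$ whose right-hand side is itself negative, and to verify that such negative slots can only lower the cumulative violation $[\sum_t g_t(x_t)]_+$, so that replacing the slot bound by the uniform surrogate $2\delta(x_t,\rho)/\rho$ in the subsequent summation step for $V(T)$ remains valid. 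Confirming that the per-slot bound, once the negative terms are accounted for, still telescopes into the stated $V(T)$ bound rather than merely holding slot-by-slot is the delicate part I would check most carefully.
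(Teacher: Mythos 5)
Your proposal is correct and arrives at the same pivotal inequality as the paper, $(\rho - \phi_t^*)\,g_t(x_t) \leq \delta(x_t,\rho)$, but by a cleaner route: where the paper introduces the perturbed value function $v(\tau)$, invokes $-\phi_t^*\in\partial v(0)$, and splits into cases on the sign of $\tilde\tau=-g_t(x_t)$ (Steps 1--3), you get the identical bound $\mathbb{E}_{\pi_t^*}f_t(x)\geq f_t(x_t)-\phi_t^*g_t(x_t)$ in one line from strong duality and evaluation of the Lagrangian at the point mass $\tilde\pi_t$, with no case split. The two derivations are equivalent in substance, since the paper's subgradient step is itself a consequence of strong duality under Assumption~\ref{assump: slater}. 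Where you genuinely improve on the paper is the final division: you correctly observe that $\delta(x_t,\rho)/(\rho-\phi_t^*)\leq 2\delta(x_t,\rho)/\rho$ requires $\delta(x_t,\rho)\geq 0$ (or $g_t(x_t)\geq 0$), whereas the paper asserts this chain unconditionally even though $\delta(x_t,\rho)$ can be negative via its telescoping dual term --- so the paper's own last inequality is unjustified in precisely the mixed-sign regime the lemma advertises as its novelty. Your plan to absorb that regime at the summation stage is the right instinct (negative per-slot bounds only help $V(T)$, and the surrogate $2\delta(x_t,\rho)/\rho$ is only needed where $\delta(x_t,\rho)\geq 0$), but you are right to flag it as the delicate point: summing $\delta(x_t,\rho)/(\rho-\phi_t^*)$ with a time-varying $\phi_t^*$ does not immediately collapse to $\tfrac{2}{\rho}\sum_t\delta(x_t,\rho)$ because the negative telescoping pieces no longer telescope once they are rescaled slot by slot, and neither your write-up nor the paper closes that step.
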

	
	\begin{proof}
		Consider the optimization problem  $\max_{\pi_t} \mathbb{E}_{\pi_t} f_t(x) $ $ \textit{\space s.t.\space} \mathbb{E}_{\pi_t} g_t(x) \leq 0 $.
		Define the associated value function
		\begin{align*}
			v(\tau) = \max_{\pi_t \in \Delta } \{\mathbb{E}_{\pi_t} f_t(x) | \mathbb{E}_{\pi_t} g_t(x) \leq -\tau \}.
		\end{align*}
		
		By definition we have $v(0) = \mathbb{E}_{\pi_t^*} f_t(x)$. We break down the following analysis into four steps.
		
		Step 1): As $\phi^*_t$ is the optimal solution to the dual problem, following Theorem 3.59 in \cite{beck2017first}, we have $- \phi^*_t \in \partial v(0) $, which means $v(\tilde \tau ) - v(0) \leq  - \phi^*_t \tilde \tau$.
		
		
		Step 2): We discuss the following  two cases.
		
		If $- \tilde \tau > 0$, we have the same result as Lemma 9 in  \cite{ding2021provably},
		\begin{align*}
			\mathbb{E}_{\tilde \pi_t}f_t(x) \leq \mathbb{E}_{\pi_t^*}f_t(x) = v(0) \leq v(\tilde \tau).
		\end{align*}
		
		If $- \tilde \tau < 0$, from the definition on  policy $\tilde \pi_t$, $\mathbb{E}_{\tilde \pi_t}f_t(x) = f_t(x_t)$ and $\mathbb{E}_{\tilde \pi_t}g_t(x) $$= g_t(x_t) $$= - \tilde \tau$, and the above definition of  $v(\tau)$, we have $\mathbb{E}_{\tilde \pi_t}f_t(x) \leq  v(\tilde \tau)$.
		
		Step 3); Combining above two steps, we have 
		\begin{align*}
			\mathbb{E}_{\pi_t^*}f_t(x) - \phi^*_t \tilde \tau =  v(0) - \phi^*_t \tilde \tau \geq v(\tilde \tau) \geq  \mathbb{E}_{\tilde \pi_t}f_t(x). 
		\end{align*}
		
		Step 4): This yields 
		\begin{align*}
			(\rho - \phi^*_t ) (- \tilde \tau ) &=   \tilde \tau \phi^*_t  +  \rho (- \tilde \tau ) \\
			&\leq \mathbb{E}_{\pi_t^*}f_t(x) - \mathbb{E}_{\tilde \pi_t}f_t(x)   + \rho (- \tilde \tau ) \leq \delta(x_t, \rho).
		\end{align*}
		where the last line  uses the result in step 3) and Lemma \ref{lemma: no_xi_delta}.
		
		As $\rho \geq 2 \phi^*_t$ and $\mathbb{E}_{\tilde \pi_t}g_t(x) = g_t(x_t) = - \tilde \tau$, we have 
		\begin{align*}
			\mathbb{E}_{\tilde \pi_t}g_t(x) \leq \frac{\delta(x_t, \rho)}{\rho - \phi^*_t} \leq \frac{2 \delta(x_t, \rho)}{\rho}.
		\end{align*}
	\end{proof}

	As the optimal dual $\phi^*$ is within range $[0, \frac{2B}{\tau}]$, the above condition $\rho \geq 2 \phi^*_t$ is satisfied automatically when $\rho \geq \frac{4B}{\tau}$. 

	\begin{lemma}
		The constraint violation is bounded as
		\begin{align*}
			V(T) \leq  O ((1+\frac{1}{\rho}) \hat \gamma_T^{\frac{7}{8}} B_{\Delta} T^{\frac{3}{4}} + \frac{G^2 T^{\frac{1}{2}}}{\rho} + \rho T^{\frac{1}{2}} ).
		\end{align*}
	\end{lemma}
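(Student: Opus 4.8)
The plan is to obtain the violation bound directly from the \emph{per-time-slot} constraint bound of Lemma \ref{lem: new gt}, which is the crucial ingredient that replaces the usual time-average argument. Since the Slater condition (Assumption \ref{assump: slater}) together with $\rho \geq \frac{4B}{\tau}$ guarantees $\rho \geq 2\phi^*_t$ at every $t$, Lemma \ref{lem: new gt} applies uniformly, yielding $g_t(x_t) \leq \frac{2}{\rho}\delta(x_t,\rho)$ for all $t$. First I would sum this over $t=1,\ldots,T$ and invoke monotonicity of $[\cdot]_+$ to get
\[
V(T) = \Big[\sum_{t=1}^T g_t(x_t)\Big]_+ \leq \Big[\frac{2}{\rho}\sum_{t=1}^T \delta(x_t,\rho)\Big]_+ .
\]
The point of working with $\delta(x_t,\rho)$, which may be negative at individual slots, is that summing first allows cancellations; a slotwise positive-part bound would be far too loose. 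It therefore suffices to upper bound $\frac{2}{\rho}\sum_t\delta(x_t,\rho)$ by the claimed (nonnegative) quantity.

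Next I would expand $\sum_t\delta(x_t,\rho)$ from Lemma \ref{lemma: no_xi_delta} into four groups and bound each. (i) Variation terms $\sum_t\Delta f_t,\sum_t\Delta g_t$: using the definition in Lemma \ref{lemma: fbound} and the fact that within a restart block each increment $\|f_s-f_{s+1}\|_{\mathbb H_k}$ is counted at most $W$ times, one gets $\sum_t\Delta f_t = O\big(W^{3/2}\gamma_W^{1/2}B_f\big)$ and similarly for $g$; with $W=\hat\gamma_T^{1/4}T^{1/2}$ and $\gamma_W\le\hat\gamma_T$ this is $O(\hat\gamma_T^{7/8}B_{\Delta}T^{3/4})$. (ii) UCB-width terms $\sum_t\beta_t\sigma_{t-1}(x_t),\sum_t\tilde\beta_t\tilde\sigma_{t-1}(x_t)$: by Cauchy--Schwarz and the information-gain bound $\sum_{t}\sigma_{t-1}^2(x_t)=O(\gamma_W)$ per block, with $\beta_t,\tilde\beta_t=O(\hat\gamma_T^{1/2})$, each block contributes $O(\gamma_W\sqrt{W})$; summing over the $T/W$ blocks gives $O(\hat\gamma_T^{7/8}T^{3/4})$. (iii) The noise term contributes $\eta G^2 T$. (iv) The dual term telescopes: since the dual variable is not reset at a restart and is initialized to $0$, $\sum_{t=1}^T\big[(\phi_t-\rho)^2-(\phi_{t+1}-\rho)^2\big]\le\rho^2$, so this group is at most $\frac{\rho^2}{2\eta}$.

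Finally I would multiply through by $\frac{2}{\rho}$ and collect terms. The variation and width groups on the $g$-side carry a constant weight while those on the $f$-side pick up a factor $\frac{1}{\rho}$, so under $B_{\Delta}=\max\{B_f,B_g\}$ they combine into $O\big((1+\frac{1}{\rho})\hat\gamma_T^{7/8}B_{\Delta}T^{3/4}\big)$, the first stated term. The remaining $\eta$-dependent contributions $\frac{2}{\rho}\eta G^2 T + \frac{\rho}{\eta}$, after the step size $\eta$ of order $T^{-1/2}$, produce exactly $\frac{G^2 T^{1/2}}{\rho} + \rho T^{1/2}$, the last two stated terms.

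I expect the main obstacle to lie in the accounting at two points. The first is correctly invoking the \emph{per-slot} bound of Lemma \ref{lem: new gt}: because $\delta(x_t,\rho)$ can be negative, one must sum \emph{before} taking the positive part, and must verify that the Slater-based condition $\rho\ge 2\phi^*_t$ holds at every $t$ despite the time-varying optimal dual. The second is the bookkeeping of the dual telescoping sum across restarts---one must confirm that the dual variable persists (is not reset together with the GP posterior), since otherwise the term would accrue once per block and grow linearly in $T$; keeping it at $\frac{\rho^2}{2\eta}$ overall is precisely what preserves the sublinear rate.
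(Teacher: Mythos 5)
Your proposal is correct and follows essentially the same route as the paper: apply the per-slot bound $g_t(x_t)\le \frac{2}{\rho}\delta(x_t,\rho)$ from Lemma \ref{lem: new gt}, sum before taking the positive part, expand $\delta(x_t,\rho)$ via Lemma \ref{lemma: no_xi_delta} into variation, UCB-width, step-size, and telescoping dual terms, and bound each with $W=\hat\gamma_T^{1/4}T^{1/2}$ and $\eta\sim T^{-1/2}$. Your explicit accounting of two points the paper leaves implicit --- that the positive part must be taken only after summing the possibly-negative $\delta(x_t,\rho)$, and that the dual variable must persist across restarts for the telescoping sum to stay at $O(\rho^2/\eta)$ rather than accruing once per block --- is accurate and strengthens rather than alters the argument.
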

	
	\begin{proof}
		We consider the constraint violation
		\begin{align*}
			V(T) &\leq \frac{2}{\rho} \sumt \delta(x_t, \rho)\\
			&\leq  \sumt  \Delta g_t  + \sumt  \Delta f_t / \rho + \sumt  \beta_t \sigma_{t-1}(x_t)/\rho \\
			&+ \sumt \tilde \beta_t \tilde \sigma_{t-1}(x_t) +  \eta G^2 T/\rho + \frac{1}{\eta} \rho.
		\end{align*}
		where $ \sumt \frac{1}{2\eta} [(\phi_t -\rho)^2 -(\phi_{t+1} - \rho)^2] \leq  \frac{1}{2\eta} (2\rho \phi_T ) \leq \frac{1}{\eta} \rho^2$.
		
		Let $\hat \gamma_T = \max\{ \gamma_T, \tilde \gamma_T \} $, $B_{\Delta}= \max \{B_g, B_f \}$, $\hat \beta_T = \max\{ \beta_T, \tilde \beta_T \} = O(\hat \gamma_T^{\frac{1}{2}})$, $W= \hat \gamma_T^{\frac{1}{4}} T^{\frac{1}{2}}$, $\eta=\frac{\rho}{G \sqrt{T}}$, we have,
		
		\begin{align*}
			V(T) \leq O \left(\left(1+\frac{1}{\rho}\right) \hat \gamma_T^{\frac{7}{8}} B_{\Delta} T^{\frac{3}{4}} + G\sqrt{T} \right).
		\end{align*}

		If $W= \hat \gamma_T^{\frac{1}{4}} T^{\frac{1}{2}} B_{\Delta}^{-\frac{1}{2}}$ \cite{zhou2021no}, we have a tighter bound $V(T) \leq O \left(\left(1+\frac{1}{\rho}\right) \hat \gamma_T^{\frac{7}{8}} B_{\Delta}^{\frac{1}{4}} T^{\frac{3}{4}} + G\sqrt{T} \right)$.
	\end{proof}

	\section{Conclusion}
	In this paper, we formulate the mmWave beam alignment problem in the time-varying multipath environment as a non-stationary kernelized bandit learning problem with constraints. The inherent correlation among the beams at successive time steps is captured by a kernel.  A primal-dual method is employed to tackle the constrained learning problem. Through periodic restarts, the proposed algorithm can adaptively adjust the beam to explore the environment and find the optimal beam with a high probability. Theoretical analysis demonstrates that both the received signal and the interference constraint converge to the optimal solution in the limit, and thus, this algorithm is asymptotically optimal. 
	For future work, it would be interesting to change the fixed restart period to an adaptive period through change detection \cite{liu2018change} -- thereby extending the algorithm in \cite{wei2021non} to our soft-constrained setting. It is unclear whether the Gaussian Process bandit can achieve the regret bound in \cite{wei2021non}  and how to incorporate constraints with two stationary tests in this algorithm. 
	
	\bibliographystyle{IEEEtran}
	\bibliography{sample}

\end{document}